\documentclass[11pt]{article}%
\pdfoutput=1
\usepackage[hidelinks]{hyperref}
\usepackage{bbm,url,microtype,amssymb,amsthm,mathtools,mathrsfs,graphicx,float,color}
\usepackage{fullpage}
\usepackage{lmodern}
\usepackage{dsfont}
\usepackage[T1]{fontenc}
\usepackage[USenglish]{babel}
\usepackage{enumerate}
\usepackage{booktabs}
\usepackage{subcaption}
\usepackage{tikz}
\usetikzlibrary{arrows.meta,calc,decorations.pathmorphing}
\usepackage{authblk}
\usepackage{bm}

\providecommand{\U}[1]{\protect\rule{.1in}{.1in}}
\newtheorem{thm}{Theorem}
\newtheorem{lem}[thm]{Lemma}
\newtheorem{prp}[thm]{Proposition}
\newtheorem{cor}[thm]{Corollary}

\newtheorem{dfn}[thm]{Definition}

\usepackage{cleveref}
\crefname{thm}{Theorem}{Theorems}
\Crefname{thm}{Theorem}{Theorems}
\crefname{lem}{Lemma}{Lemmas}
\Crefname{lem}{Lemma}{Lemmas}
\crefname{prp}{Proposition}{Propositions}
\Crefname{prp}{Proposition}{Propositions}
\crefname{cor}{Corollary}{Corollaries}
\Crefname{cor}{Corollary}{Corollaries}
\crefname{prb}{Problem}{Problems}
\Crefname{prb}{Problem}{Problems}
\crefname{dfn}{Definition}{Definitions}
\Crefname{dfn}{Definition}{Definitions}
\crefname{obs}{Observation}{Observations}
\Crefname{obs}{Observation}{Observations}
\crefname{section}{Section}{Sections}
\crefname{appendix}{Appendix}{Appendices}
\numberwithin{equation}{section}
\let\oldref\ref
\renewcommand{\ref}[1]{(\oldref{#1})}

\newcommand{\R}{\mathbb{R}}

\newcommand{\ket}[1]{\vert #1\rangle }
\newcommand{\bra}[1]{\langle #1 \vert}

\begin{document}
\title{Violating a Bell Inequality:\\What to Do When You Lose Your Quantum System}
\author[1,2,3]{Xinyu Xu}
\author[4]{Yuqing Li}
\author[5]{Mingze Xu}
\author[6,1,3,7]{Dawei Ding\thanks{daweiding@fudan.edu.cn}}
\affil[1]{\small Shanghai Institute for Mathematics and Interdisciplinary Sciences (SIMIS), Shanghai 200433, China}
\affil[2]{\small Research Institute of Intelligent Complex Systems, Fudan University, Shanghai 200433, China}
\affil[3]{\small Yau Mathematical Sciences Center, Tsinghua University, Beijing 100084, China}
\affil[4]{\small Department of Computer Science, University of Pittsburgh, Pittsburgh, Pennsylvania 15260, USA}
\affil[5]{\small Department of Electrical and Computer Engineering, University of Illinois Urbana-Champaign, Urbana, Illinois 61801, USA}
\affil[6]{\small Center for Mathematics and Interdisciplinary Sciences, Fudan University, Shanghai 200433, China}
\affil[7]{\small Beijing Institute of Mathematical Sciences and Applications (BIMSA), Beijing 101408, China}
\date{}
\maketitle

\begin{abstract}
The detection loophole is a well-known loophole in Bell nonlocality motivated by the inefficiency of detectors used in Bell experiments. However, this inefficiency is actually a ubiquitous feature of any photonic platform for quantum networks since photons are readily absorbed by the surrounding environment. When the quantum system, usually a photon, is lost, what should the parties output? The most natural choice for parties experiencing loss is to fall back to a deterministic strategy where each party produces an output via a function of her local input. In this paper, we study how to choose such a fallback strategy for general Bell inequalities with respect to different efficiencies $\eta$ to maximize the possible violation. We mathematically prove that for the CHSH inequality, there exists a fallback strategy that is optimal for all $\eta$. However, in general, we find that the optimal fallback strategy can vary with $\eta$, sometimes in surprising ways. For example, we find an example of a Bell inequality where the optimal fallback strategy near the threshold efficiency is not an optimal deterministic strategy in the lossless setting. Our results can reduce the efficiency requirements for closing the detection loophole and thus are useful for applications of Bell inequality violation, such as quantum telepathy or device-independent quantum key distribution.
\end{abstract}

\tableofcontents

\section{Introduction}
\label{sec:intro}

Quantum nonlocality is the phenomenon in which multiple parties can share correlations using quantum entanglement that are impossible to produce classically. This is also known as Bell inequality violation~\cite{bell1964einstein} and is one of the most counterintuitive properties of quantum mechanics. Early experiments have demonstrated this provable separation between classical and quantum mechanics~\cite{freedman1972experimental,fry1976experimental,aspect1982locality,weihs1998violation}, ultimately culminating in the 2022 Nobel prize in physics~\cite{aspect2022nobel}. 

However, to convincingly demonstrate the violation of a Bell inequality, experiments need to close a series of possible loopholes that still allow the possibility of a classical explanation. One well-known loophole is the \textbf{detection loophole}~\cite{PhysRevA.47.R747}. This historically referred to the problem of inefficient detectors used in experiments that sometimes fail to measure the quantum system. Such a setting is mathematically modeled with a parameter called the \textbf{efficiency} $\eta \in [0,1]$, which is defined as the probability of successfully completing the quantum measurement. 
This type of inefficiency is actually quite prevalent in physical implementations of  quantum networks.
%More generally, other physical effects can also be mathematically modeled by this inefficiency. 
For example, in experiments that directly measure pairs of entangled photons from a common source, the photons may be absorbed by the transmission medium before reaching the detectors. In this case the efficiency decays exponentially with the distance between the detectors and the source. \emph{In general, loss is an unavoidable problem for any photonic platform, the physical platform of choice for quantum communication networks.}
Even for memory-based implementations, an inefficiency may exist. For example, if neutral atoms are used as quantum memories~\cite{covey2023quantum}, the atoms can be lost to the environment due to an imperfect vacuum. 
Furthermore, measurements of atomic systems often involve measuring fluorescence, which can be plagued by photon collection problems or detector inefficiency.
These nonidealities can also be mathematically modeled by an $\eta$ parameter. In general, these physical effects all involve a \textbf{loss} of the quantum system, whether it be a photon or an atom. For linguistic convenience, we will use such terminology going forward. The loss of a quantum system is depicted schematically in~\Cref{fig:lossy_scenario}.
\begin{figure} [h]
    \centering
    \definecolor{accent}{RGB}{42,82,126}
    \definecolor{loss}{RGB}{174,49,50}
    \definecolor{linegray}{RGB}{65,69,74}
        \begin{tikzpicture}[
          font=\small, >=Latex,
          source/.style={circle,draw=accent,line width=.9pt,fill=accent!7,minimum size=16mm,align=center},
          party/.style={rounded corners=1.5pt,draw=linegray,line width=.8pt,fill=black!2,minimum width=18mm,minimum height=9mm,align=center},
          junction/.style={circle,draw=linegray,fill=white,line width=.75pt,minimum size=3.1mm,inner sep=0pt},
          photon/.style={draw=accent,line width=.85pt,decorate,decoration={snake,amplitude=.55mm,segment length=3.2mm}},
          received/.style={draw=linegray,line width=.8pt,-{Latex[length=2.1mm,width=1.5mm]}},
          lost/.style={draw=loss,line width=.8pt,dashed,-{Latex[length=1.9mm,width=1.4mm]}},
          label/.style={fill=white,inner sep=1.2pt},
          lossprob/.style={anchor=west,inner sep=0pt,xshift=4pt}]
        
        \node[source] (S) at (0,0) {$\rho$};
        \node[above=2pt] at (S.north) {entangled state};
        \node[party] (P1) at (6.8,3.2) {Party $1$};
        \node[party] (P2) at (6.8,0) {Party $2$};
        \node at (6.8,-1.45) {$\vdots$};
        \node[party] (Pn) at (6.8,-3.2) {Party $n$};
        
        \coordinate (L1) at ($(S)!0.58!(P1)$);
        \coordinate (L2) at ($(S)!0.58!(P2)$);
        \coordinate (Ln) at ($(S)!0.58!(Pn)$);
        \draw[photon] (S) -- (L1);
        \draw[photon] (S) -- (L2);
        \draw[photon] (S) -- (Ln);
        \node[junction] at (L1) {};
        \node[junction] at (L2) {};
        \node[junction] at (Ln) {};
        
        \draw[received] (L1) -- node[pos=.76,above=5pt,inner sep=0pt] {$\eta_1$} (P1);
        \draw[received] (L2) -- node[pos=.62,above=5pt,inner sep=0pt] {$\eta_2$} (P2);
        \draw[received] (Ln) -- node[pos=.76,above=5pt,inner sep=0pt] {$\eta_n$} (Pn);
        \draw[lost] (L1) -- ++(0,-1.08)
          node[pos=.62,lossprob] {$1-\eta_1$}
          node[below,text=loss,inner sep=1pt] {lost};
        \draw[lost] (L2) -- ++(0,-1.08)
          node[pos=.62,lossprob] {$1-\eta_2$}
          node[below,text=loss,inner sep=1pt] {lost};
        \draw[lost] (Ln) -- ++(0,-1.08)
          node[pos=.62,lossprob] {$1-\eta_n$}
          node[below,text=loss,inner sep=1pt] {lost};
    \end{tikzpicture}
    \caption{An entangled quantum state, with each subsystem independently undergoing loss. }
    \label{fig:lossy_scenario}
\end{figure}

In the event of a loss, a party no longer has access to her share of a quantum system. However, she is still required to produce an output. This is intended to close the detection loophole, whereby a classical explanation remains possible through manipulation of the statistics of the measurement outcomes. Beyond foundational concerns, the parties are also required to produce outputs in applications of Bell inequality violation. 
For example, in quantum telepathy~\cite{ding2026quantum}, parties use quantum entanglement to coordinate decisions in a real-world scenario where communication between parties is restricted, such as in high-frequency trading~\cite{brandenburger2016team,szegedy2020systems,ding2024coordinating} or distributed systems~\cite{da2025entanglement,arun2025faster}. In such scenarios, the parties are assumed to always produce an output\footnote{Inaction can also be defined as an output. } in response to their local inputs. 
%The resulting coordinated behavior in the presence of loss should still be non-classical for a quantum advantage to persist.
This requirement is also necessary for an unconditional security guarantee in device-independent quantum key distribution (DIQKD)~\cite{zapatero2023advances}. Now, how can the party produce an output without a quantum system? A natural solution is to fall back to a classical strategy, where each party experiencing loss applies a function to her local input to produce her output~\cite{branciard2011detection,ding2024coordinating,gigena2024robustselftestingbellinequalities}. 

In this paper, we investigate the question of how to determine the optimal ``fallback strategy'' for a general Bell inequality in the presence of loss. Optimality here means the highest possible violation of the Bell inequality with respect to an efficiency $\eta$. 
Finding such strategies is important for closing the detection loophole at lower efficiencies. It is also crucial for achieving higher performance in applications of Bell inequality violation, such as quantum telepathy and DIQKD, if the physical quantum network used exhibits loss.
Now, it is easy to see that to find such optimal fallback strategies, it is sufficient to consider deterministic strategies where each party's output is a deterministic function of her input (adding randomness leads to taking convex combinations which does not increase the value of the Bell expression). However, such deterministic strategies are discrete objects, while the efficiency $\eta \in [0,1]$ is a continuous parameter. Hence, we expect the set of optimal fallback strategies to suddenly change as we vary $\eta$. For some Bell inequalities such as the CHSH inequality, the same fallback strategy is optimal for all $\eta \in [0,1]$. In particular, the fallback strategy is optimal for $\eta=0$ at which point the quantum system can never be used. Hence, it is also the optimal deterministic strategy for the CHSH inequality without loss. This universality property for the CHSH inequality was shown numerically in~\cite{gigena2024robustselftestingbellinequalities}, while we prove it analytically below. In contrast, Ref.~\cite{PhysRevA.97.062123} numerically observed for the $I_{3322}$ inequality, within a restricted class of two-qubit strategies, that different fallback strategies can become optimal at different detection efficiencies.

Our paper is structured as follows. In~\Cref{sec:lossy_behaviors}, we use the language of nonlocal games to make mathematical definitions that describe how loss can affect the parties' behavior (input-output distribution). We show in particular that the effect of loss and the choice of a fallback strategy can be incorporated via a transformation of the corresponding Bell expression. The transformed Bell expression can subsequently be optimized over all quantum strategies like any other Bell expression using numerical techniques such as the see-saw method~\cite{werner2001bell,Ketjl} or the Navascues-Pironio-Acín (NPA) hierarchy~\cite{navascues2008convergent}. In~\Cref{sec:fallback}, we prove some general results about optimal fallback strategies, including a robustness theorem giving conditions for which an optimal fallback strategy remains optimal for $\eta \in (1-\varepsilon,1)$ for some $\varepsilon >0$. In~\Cref{sec:case_studies}, we conduct case studies of well-known Bell inequalities such as the CHSH inequality and the $I_{3322}$ inequality in the presence of loss. In particular, given a Bell inequality we analytically or numerically determine whether or not there exists a single fallback strategy that is optimal for all $\eta \in [0,1]$. Surprisingly, we also find that the optimal fallback strategy at $\eta$ values near the threshold efficiency can be a non-optimal deterministic strategy that does not attain the classical value in the lossless setting.
We conclude in~\Cref{sec:conc}.

\section{Lossy Behaviors in Nonlocal Games}
\label{sec:lossy_behaviors}
\subsection{Nonlocal game}
We will use the language of nonlocal games, which is an equivalent formulation of Bell inequalities. In our paper, we will use these two terms interchangeably.
We first review the definition of a nonlocal game. Define $[n] \coloneqq \{1, 2, \cdots n\}$.
\begin{dfn}[Nonlocal game]
Let $n\ge 2$ be an integer. For each $i\in[n]$, let $S_i$ and $A_i$ be finite sets. We call $S_i$ the \textbf{input set} and $A_i$ the \textbf{output set} of party $i$. Define
\begin{equation*}
    S \coloneqq \prod_{i=1}^n S_i,
    \qquad
    A \coloneqq \prod_{i=1}^n A_i.
\end{equation*}
We define the \textbf{utility function} as a function
\begin{equation*}
    \mathcal V:A \times S\rightarrow \R
\end{equation*}
Let $\pi$ be a probability distribution over $S$, which we call the \textbf{input distribution}. A \textbf{nonlocal game} is specified by the tuple $G = (\mathcal V,\pi)$.
\end{dfn}

In a nonlocal game with $n$ parties, each party $i$ receives an input $s_i\in S_i$ and produces an output $a_i\in A_i$. The parties are assumed to know the input distribution $\pi$ and the utility function $\mathcal V$ in advance. Before the game starts, they may agree on a joint strategy to produce outputs given their inputs in order to maximize the expected utility, but no communication is allowed after the inputs are received.

Executing a strategy induces a \textbf{behavior}, namely a conditional probability distribution $p(\mathbf{a}|\mathbf{s})$,
where $\mathbf s=(s_1,\ldots,s_n)$ and $\mathbf a=(a_1,\ldots,a_n)$. 
This behavior describes the distribution of the outputs conditioned on the inputs. The expected utility 
%\textbf{winning probability} 
associated with a behavior $p(\mathbf a|\mathbf s)$ is defined as
\begin{equation*}
    \omega(G,p)
    %p_{\mathrm{win}}
    \coloneqq
    \sum_{\mathbf s\in S}
    \sum_{\mathbf a\in A}
    \pi(\mathbf{s})\,
    p(\mathbf{a}|\mathbf{s})\,
    \mathcal V(\mathbf{a},\mathbf{s}).
\end{equation*}
We also define the expected utility of a strategy $\mathcal S$ as 
$$\omega(G, \mathcal S) \coloneqq \omega(G,p),$$
where $p$ is the behavior induced by $\mathcal S$. We will sometimes suppress $G$ if it is clear from context.

We next describe the behaviors that can be generated using classical resources. We begin with deterministic strategies.

\begin{dfn}[Deterministic strategy]
Let $G=(\mathcal V,\pi)$ be a nonlocal game. A \textbf{deterministic strategy} $\mathcal S_d$ is specified by a collection of functions $\{f_i\}_{i=1}^n$, where
\begin{equation*}
    f_i:S_i\rightarrow A_i.
\end{equation*}
The induced \textbf{deterministic behavior} is
\begin{equation*}
    p_d(\mathbf a|\mathbf s)
    =
    \prod_{i=1}^n \delta_{a_i,f_i(s_i)},
\end{equation*}
where $\mathbf a\in\prod_{i=1}^n A_i$ and $\mathbf s\in\prod_{i=1}^n S_i$.
\end{dfn}

A general \textbf{classical strategy} $\mathcal S_c$ allows the parties to use shared randomness. Equivalently, every \textbf{classical behavior} can be written as a convex combination of deterministic behaviors:
\begin{equation*}
%\label{eq:classical_strategy}
    p_c(\mathbf a|\mathbf s)
    =
    \sum_{\lambda} q(\lambda)
    \prod_{i=1}^n \delta_{a_i,f_i^\lambda(s_i)},
\end{equation*}
where $\lambda$ denotes the shared random variable, $q(\lambda)$ is its probability distribution, and each $f_i^\lambda$ is a function $f_i^\lambda:S_i\to A_i$. The \textbf{classical value} of the game, denoted by $\omega_c(G)$, is the maximum of $\omega(G,p_c)$
% winning probability 
over all classical strategies. Since the expected score is linear in the behavior, this maximum can be attained by optimizing over deterministic strategies.

We now turn to quantum strategies. In a quantum strategy, the parties share a quantum state before the game starts and, after receiving their inputs, perform corresponding measurements on their respective subsystems.

\begin{dfn}[Quantum strategy]
Let $G=(\mathcal V,\pi)$ be a nonlocal game. For each $i\in[n]$, let $\mathcal H_i$ be a finite dimensional Hilbert space. Let $\ket{\psi}$ be a quantum state
\begin{equation*}
    \ket{\psi}\in \bigotimes_{i=1}^n \mathcal H_i.
\end{equation*}
For each input $s_i\in S_i$, let $M_i(s_i)$ be a projective measurement 
\begin{equation*}
    M_i(s_i)=\{E_{i,a_i}^{(s_i)}\}_{a_i\in A_i}
\end{equation*}
on $\mathcal H_i$ with results in $A_i$. A \textbf{quantum strategy} $\mathcal S_q$ is the tuple $(\ket{\psi}, \{M_i(s_i)\}_{i \in [n], s_i\in S_i})$.
The \textbf{quantum behavior} induced by $\mathcal S_q$ is
\begin{equation*}
    p_q(\mathbf a|\mathbf s)
    =
    \bra{\psi}
    \bigotimes_{i=1}^n
    E_{i,a_i}^{(s_i)}
    \ket{\psi}.
\end{equation*}
\end{dfn}

Every classical behavior is also a quantum behavior. Quantum behaviors
may achieve a higher expected utility for certain nonlocal games.
Let $\mathcal Q$ denote the set of behaviors realizable by
finite-dimensional quantum strategies, and let
$\overline{\mathcal Q}$ denote its closure in the finite-dimensional
Euclidean space of behaviors. In general, $\mathcal Q$ need not be
closed~\cite{slofstra2019set}. The \textbf{quantum value} of the game,
denoted by $\omega_q(G)$, is defined as
\begin{equation*}
    \omega_q(G)
    \coloneqq
    \sup_{p\in\mathcal Q}\omega(G,p)
    =
    \max_{p\in\overline{\mathcal Q}}\omega(G,p).
\end{equation*}

\subsection{Lossy behaviors}
We now define induced behaviors under loss.
Let $T\subseteq [n]$ denote the subset of parties who lose their quantum system. Parties in $T$ produce their outputs using a preset deterministic strategy $\mathcal{S}_d$, while parties in $[n]\setminus T$ perform quantum measurements on their received systems. We adapt the definition introduced in Ref.~\cite{ding2024coordinating} 
to describe the resulting behavior.
\begin{dfn}[Semideterministic strategy]
    Let $G=(\mathcal V,\pi)$ be a nonlocal game. 
    Given a quantum strategy $\mathcal S_q$ and a deterministic strategy $\mathcal S_d$ for $G$, the \textbf{semideterministic strategy}, denoted by the tuple $(\mathcal S_q, \mathcal S_d)$, is the strategy in which each party
    who do not suffer loss each performs the measurement specified in 
    $\mathcal S_q$, while each party who do suffer loss each implements the function specified in $\mathcal S_d$, which we call the \textbf{fallback strategy}.\footnote{Of course, this is also a deterministic strategy, but we introduce this new terminology to make it clear that this is in the context of loss. }
    Let $T \subseteq [n]$ denote the parties that suffer loss, and $\bar{T} = [n] \backslash T$ the parties that do not suffer loss.
    Their behavior can be described by %\XY{$p_{q,\bar{T}}(\mathbf{a}|\mathbf{s})$ or $p_{\mathcal S_q,\bar{T}}(\mathbf{a}|\mathbf{s})$ here?}
    \begin{equation*}
        p_{\mathcal S_q \sqcup_T \mathcal S_d}(\mathbf a \vert \mathbf s) \coloneqq \prod_{i\in T} \delta_{a_i,f_i(s_i)} \cdot p_{q,\bar{T}}(\mathbf{a}_{\bar T}|\mathbf{s}_{\bar T}),
    \end{equation*}
    where $f_i$ are the functions used in $\mathcal S_d$ and
    \begin{equation*}
        p_{q,\bar{T}}(\mathbf{a}_{\bar T}|\mathbf{s}_{\bar T}) \coloneqq \langle \psi |\bigotimes_{i\in T} I_{\mathcal H_i} \otimes  \bigotimes_{j\notin T} E_{j,a_j}^{(s_j)}|\psi \rangle.
    \end{equation*}
    is the behavior of the parties in $\bar T$.

    Let $\bm{\eta}$ be the efficiency parameters. For a semideterministic strategy $(\mathcal S_q,\mathcal S_d)$, the induced \textbf{$\bm \eta$-semideterministic behavior} is given by
    \begin{equation}
    \label{eq:lossy_behavior}
        p_{(\mathcal S_q , \mathcal S_d), \bm \eta}(\mathbf a\vert \mathbf s) \coloneqq \sum_{T \subseteq [n]} \prod_{i \in T} (1-\eta_i) \prod_{j \not \in T} \eta_j \cdot p_{\mathcal S_q \sqcup_T \mathcal S_d}(\mathbf a \vert \mathbf s).
    \end{equation}
\end{dfn}

By using shared randomness, the parties can realize a convex combination of the
behaviors in \Cref{eq:lossy_behavior}. We call such a strategy a \textbf{semiclassical strategy}. The induced \textbf{$\bm\eta$-semiclassical behavior} is given by
\begin{equation*}
p_{\bm\eta}(\mathbf a|\mathbf s)
=
\sum_{\lambda}
q(\lambda)
p_{(\mathcal S_q^\lambda,\mathcal S_d^\lambda),\bm\eta}
(\mathbf a|\mathbf s),
\end{equation*}
where $q(\lambda)$ is a probability distribution and
$(\mathcal S_q^\lambda,\mathcal S_d^\lambda)$ is a semideterministic strategy for each $\lambda$. We also refer to this as an \textbf{$\bm\eta$-lossy behavior}. 
Since the expected utility is linear in the behavior, using shared randomness does not increase the maximum achievable expected utility under loss.
The \textbf{$\bm \eta$-lossy value} of $G$ is thus defined as 
\begin{equation*}
\omega_{\bm\eta}
\coloneqq
\sup_{(\mathcal S_q,\mathcal S_d)}
\sum_{\mathbf s\in S}
\sum_{\mathbf a\in A}
\pi(\mathbf s)
p_{(\mathcal S_q,\mathcal S_d),\bm\eta}(\mathbf a|\mathbf s)
\mathcal V(\mathbf a,\mathbf s).
\end{equation*}
where the supremum is taken over all quantum strategies $\mathcal S_q$
and deterministic strategies $\mathcal S_d$. Since there are only finitely
many deterministic strategies, there exists a deterministic strategy
$\mathcal S_d^*$ for which this supremum is attained with respect to the
choice of fallback strategy. We call any such $\mathcal S_d^*$ an
\textbf{optimal fallback strategy at efficiency $\bm\eta$}.

\subsection{Computing the lossy value}
We wish to compute the lossy value of a nonlocal game. 
The expected utility of a nonlocal game $G=(\mathcal V,\pi)$ naturally
defines a \textbf{Bell expression} as follows:
\begin{equation*}
    \mathcal{B} = 
    \sum_{\mathbf{a},\mathbf{s}} \beta_{\mathbf{a}}^{\mathbf{s}} p(\mathbf{a}|\mathbf{s})
\end{equation*}
where 
\begin{equation*}
    \beta_{\mathbf{a}}^{\mathbf{s}} \coloneqq \pi(\mathbf{s}) \mathcal{V}(\mathbf{a},\mathbf{s}).
\end{equation*}
We will sometimes refer to the classical or quantum value of a Bell expression: 
\begin{equation*}
    \omega_c(\mathcal B) \coloneqq \max_{p \text{ deterministic}} 
    \sum_{\mathbf{a},\mathbf{s}} \beta_{\mathbf{a}}^{\mathbf{s}} p(\mathbf{a}|\mathbf{s})
\end{equation*}
and
\begin{equation*}
    \omega_q(\mathcal B) \coloneqq \sup_{p \text{ quantum}} 
    \sum_{\mathbf{a},\mathbf{s}} \beta_{\mathbf{a}}^{\mathbf{s}} p(\mathbf{a}|\mathbf{s})
\end{equation*}
Furthermore, given a classical or quantum strategy $\mathcal S$, we define $\omega(\mathcal B, \mathcal S):=\omega(\mathcal B, p)$ as the value of the Bell expression evaluated on the behavior $p$ induced by $\mathcal S$.
    
Fix a deterministic strategy $\mathcal S_d$, specified by
the functions $f_i:S_i\to A_i$.
Then, the \textbf{$\bm\eta$-lossy Bell expression with respect to $\mathcal S_d$} is given by:
    \begin{equation*}
    %\label{eq:lossy_Bell_operator}
    \mathcal B_{\mathcal S_d}(\bm\eta)
    \coloneqq
    \sum_{\mathbf{s},\mathbf{a}}
    \beta_{\mathbf{a}}^{\mathbf{s}}
    \sum_{T \subseteq [n]}
    \left[\prod_{i \in T} (1-\eta_i)
    \prod_{j \notin T} \eta_j\right]
    \left[\prod_{k\in T}
    \delta_{a_k,f_k(s_k)}\right]
    p
    (\mathbf{a}_{\bar T}\mid\mathbf{s}_{\bar T}),
\end{equation*}
where we used \Cref{eq:lossy_behavior}, $\bar T$ is the complement of $T$,
and $p
    (\mathbf{a}_{\bar T}\mid\mathbf{s}_{\bar T})$ is the marginal of a behavior 
\(p(\mathbf{a}|\mathbf{s})\) induced by $\mathcal S_q$, namely
\begin{equation*}
        p
    (\mathbf{a}_{\bar T}\mid\mathbf{s}_{\bar T})
    \coloneqq
    \sum_{\mathbf{a}_T}  p(\mathbf{a}|\mathbf{s}).
\end{equation*}
This marginal is well defined since we only consider
no-signaling behaviors, for which the right-hand side is
independent of $\mathbf{s}_T$. 
We note that the bipartite version of this result was presented in~\cite{gigena2024robustselftestingbellinequalities}.
We can then compute the quantum value of $\mathcal B_{\mathcal S_d}(\bm \eta)$ and thereby obtain the largest expected utility with efficiency $\bm \eta$ achievable with $\mathcal S_d$ as the fallback strategy. We will call this the \textbf{$\bm\eta$-lossy value with respect to $\mathcal S_d$} and denote it by $\omega_{\bm\eta}(\mathcal B,\mathcal S_d) \coloneqq \omega_q(\mathcal B_{\mathcal S_d}(\bm \eta))$.

Standard methods for optimizing Bell expressions can then be applied directly. In particular, upper bounds can be obtained using the Navascués--Pironio--Acín (NPA) hierarchy~\cite{navascues2008convergent}, while lower bounds can be obtained numerically using see-saw optimization~\cite{werner2001bell,Ketjl} over the shared quantum state and measurements. Finally, by repeating this optimization over all deterministic strategies $\mathcal S_d$, one obtains the lossy value:
\begin{align*}
    \omega_{\bm{\eta}}(\mathcal B) = \max_{\mathcal S_d} \omega_{\bm{\eta}}(\mathcal B,\mathcal S_d).
\end{align*}

\section{Nonlocal Games Under Loss: General Results}
\label{sec:fallback}
We now prove some facts about nonlocal games in the presence of loss, including facts about optimal fallback strategies.

For the rest of this paper, for simplicity we will
focus on the bipartite setting with \textbf{symmetric loss}, meaning both parties have the same detection efficiency: $\eta_1 =\eta_2=\eta$. 
Under symmetric loss, we call the largest achievable expected utility the $\eta$-lossy value and denote it by $\omega_\eta$. We also say the optimal fallback strategy at efficiency $\eta$ to mean the optimal fallback strategy under symmetric loss.

In this setting, we first prove a basic result regarding how the $\eta$-lossy value varies with $\eta$.
\begin{prp}
    Let $G$ be a two-player nonlocal game. 
    Assume symmetric loss and let $\omega_\eta$ be the $\eta$-lossy value. Then, $\omega_\eta$ has the following properties:
    \begin{enumerate}
        \item $\omega_0 = \omega_c$ and $\omega_1 = \omega_q$,
        \item $\omega_\eta$ is non-decreasing with $\eta$,
        \item there exists $\eta^* \in [0,1]$ such that for $\eta \leq \eta^*$, 
        \begin{align*}
            \omega_\eta = \omega_c
        \end{align*}
        and for $\eta > \eta^*$,
        \begin{align*}
            \omega_\eta > \omega_c.
        \end{align*}
    \end{enumerate}
\end{prp}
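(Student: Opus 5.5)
We would argue directly from the definition of the $\eta$-lossy value. In the bipartite symmetric case, write $F(\eta; \mathcal S_q, \mathcal S_d)$ for the expected utility of the $\eta$-semideterministic behavior. By \Cref{eq:lossy_behavior} it expands as
\begin{align*}
F(\eta) = \eta^2 \,\omega(p_q) + \eta(1-\eta)\big[\omega(p_{\mathcal S_q\sqcup_{\{1\}}\mathcal S_d}) + \omega(p_{\mathcal S_q\sqcup_{\{2\}}\mathcal S_d})\big] + (1-\eta)^2\,\omega(p_d).
\end{align*}
Each of the four behaviors appearing here is a quantum behavior: the mixed ones are realized by letting the lossy party measure trivially and output $f_i(s_i)$ while the other measures. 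Hence each term is at most $\omega_q$. In the mixed terms and in the $p_d$ term at least one party acts deterministically. We plan to show these terms are in fact at most $\omega_c$, because a behavior of the form $\delta_{a_1,f_1(s_1)}\, p(a_2|s_2)$ is a product of local behaviors and therefore classical.

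For item 1, at $\eta=0$ we get $F=\omega(p_d)\le\omega_c$, with equality for an optimal deterministic $\mathcal S_d$, so $\omega_0=\omega_c$. At $\eta=1$ we get $F=\omega(p_q)$, and taking the supremum over $\mathcal S_q$ gives $\omega_1=\omega_q$.

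For item 2, the plan is to simulate lower efficiency by deliberately discarding. Given $\eta'<\eta$ and a strategy used at $\eta'$, each party at efficiency $\eta$ independently discards her received system with probability $1-\eta'/\eta$ and uses $f_i$ instead. Her effective loss event then has probability $1-\eta'$, and the two parties' loss events remain independent, so the induced behavior equals the $\eta'$-semideterministic behavior. This protocol uses private randomness, which is not literally a semideterministic strategy. We handle this by noting that the discard coin can be absorbed into the quantum strategy: tensor an ancilla coin onto each party's state and modify the measurements so that, on outcome ``discard,'' the party outputs $f_i(s_i)$. This is still a projective finite-dimensional quantum strategy, possibly after a Naimark dilation on the enlarged space. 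Hence every value achievable at $\eta'$ is achievable at $\eta$, and $\omega_{\eta'}\le\omega_\eta$.

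For item 3, item 2 shows that $\omega_\eta\ge\omega_0=\omega_c$ for all $\eta$. By monotonicity the set $\{\eta:\omega_\eta=\omega_c\}$ is an initial interval containing $0$; let $\eta^*$ be its supremum. The delicate point is whether $\eta^*$ itself belongs to the set, i.e.\ that the interval is closed. For this we use continuity from the right. Each $F(\eta;\mathcal S_q,\mathcal S_d)$ is a quadratic in $\eta$ whose coefficients are bounded uniformly by $\max|\beta|$, so the family is equi-Lipschitz in $\eta$. The supremum $\omega_\eta$ is then Lipschitz and in particular continuous. If $\omega_\eta=\omega_c$ for all $\eta<\eta^*$, continuity gives $\omega_{\eta^*}=\omega_c$. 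This yields exactly the stated dichotomy, including the degenerate case $\eta^*=1$ when $\omega_q=\omega_c$. The main obstacle we expect is the randomness-absorption step in item 2, which needs to be stated carefully within the paper's definition of quantum strategies with projective measurements.
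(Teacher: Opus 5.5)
Your proof is correct. Items 1 and 2 follow the paper's argument: the paper also proves monotonicity by having each party at the higher efficiency discard her system with probability $1-\eta'/\eta$ and fall back to $f_i$.

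Your ancilla construction fills in a step the paper leaves implicit, namely that this coin-flipping protocol is itself a finite-dimensional projective strategy. No Naimark dilation is needed, since $E^{(s_i)}_{i,a}\otimes\ket{0}\bra{0}+\delta_{a,f_i(s_i)}\,I\otimes\ket{1}\bra{1}$ on $\mathcal H_i\otimes\mathbb C^2$ is already projective.

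The genuine difference is the continuity step in item 3. The paper writes $\omega_\eta$ as a maximum of a jointly continuous function over the compact set $\overline{\mathcal Q}$ and the finite set of fallback strategies. It then invokes Berge's maximum theorem and takes $\eta^*=\max\{\eta:\omega_\eta=\omega_c\}$. You instead note that each $F(\eta;\mathcal S_q,\mathcal S_d)$ is a quadratic in $\eta$ with uniformly bounded coefficients, so the family is equi-Lipschitz and the supremum is Lipschitz. Your route is more elementary and quantitative: it needs neither the closure $\overline{\mathcal Q}$ nor attainment of the supremum, and it gives an explicit Lipschitz constant. The paper's route uses a general theorem that needs only joint continuity, but it has to pass to the compact closure of the quantum set.

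Minor slips:
\begin{itemize}
\item The uniform bound is $\sum_{\mathbf s}\max_{\mathbf a}|\beta^{\mathbf s}_{\mathbf a}|$ (or $\max|\mathcal V|$) rather than $\max|\beta|$, though any uniform bound suffices.
\item At $\eta^*$ you need continuity from the left, not the right. This is harmless since you establish full continuity.
\item The claim that the mixed terms are at most $\omega_c$ is never used in this proposition.
\end{itemize}
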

\begin{proof}
    Notice that when $\eta=0$, the lossy Bell expression reduces to an expression over deterministic strategies, and when $\eta=1$, it reduces to the usual Bell expression over quantum strategies. Therefore, $\omega_0=\omega_c, \omega_1=\omega_q$, proving the first statement.

    For $0\leq\eta<\eta'\leq1$, consider any semideterministic strategy $(\mathcal S_q, \mathcal S_d)$. Let the two functions used in $\mathcal S_d$ be $f_1,f_2$, and the quantum behavior induced by $\mathcal S_q$ be $p_q(a_1,a_2\vert s_1,s_2)$. At efficiency $\eta'$, define the quantum strategy as follows: each party independently performs the quantum measurement specified by $S_q$ with probability $\eta/\eta'$ and outputs $f_i(s_i)$ otherwise. The fallback strategy remains unchanged. The new quantum behavior can be expressed as
    \begin{align}
        p'_q(a_1,a_2\vert s_1,s_2)=&\left( \frac{\eta}{\eta'} \right)^2p_q(a_1,a_2\vert s_1,s_2)+\frac{\eta}{\eta'}\left( 1-\frac{\eta}{\eta'} \right)p_q(a_1\vert s_1)\delta_{a_2,f_2(s_2)}\notag\\
        &+\frac{\eta}{\eta'}\left( 1-\frac{\eta}{\eta'} \right)p_q(a_2\vert s_2)\delta_{a_1,f_1(s_1)}+\left( 1-\frac{\eta}{\eta'} \right)^2\delta_{a_1,f_1(s_1)}\delta_{a_2,f_2(s_2)}.
    \end{align}
    Substituting this for the quantum behavior into the definition of an $\eta'$-semideterministic behavior, we obtain the $\eta$-semideterministic behavior induced by $(\mathcal S_q, \mathcal S_d)$. Therefore, every expected utility achievable at efficiency $\eta$ is also achievable at efficiency $\eta'$. Hence, $\omega_\eta$ is non-decreasing with $\eta$, proving the second statement.
    
    Finally, recall that $\overline{\mathcal Q}$ is the closure of the set
    of finite-dimensional quantum behaviors. Since the Bell scenario is
    finite, $\overline{\mathcal Q}$ is compact. Moreover, the expected
    utility is continuous in the behavior and in $\eta$, and optimizing
    over $\mathcal Q$ or over $\overline{\mathcal Q}$ gives the same
    supremum. Since there are only finitely many deterministic fallback
    strategies, the $\eta$-lossy value can equivalently be expressed as a
    maximum of a continuous function over the compact set
    $\overline{\mathcal Q}$ and a finite set of deterministic strategies.
    Hence, by Berge's Maximum Theorem~\cite{berge1963topological},
    $\omega_\eta$ is continuous in $\eta$.    

    We now define
    $$E_c \coloneqq \{\eta \in [0,1] : \omega_{\eta} = \omega_c\}$$
    This set is closed, bounded, and non-empty since $\omega_0 =\omega_c$. Hence it has a maximum:
    \begin{align*}
        \eta^* \coloneqq \max E_c.
    \end{align*}
    For $\eta > \eta^*$, by the definition of $\eta^*$, $\omega_\eta \neq \omega_c$. Since $\omega_\eta$ is non-decreasing with $\eta$, $\omega_\eta > \omega_c$.\footnote{Note that $\eta^*$ could equal 1, which is the case when $\omega_q = \omega_c$. } For $\eta \leq \eta^*$, we know $\omega_{\eta^*} = \omega_c$. Since $\omega_\eta$ is non-decreasing with $\eta$, $\omega_\eta \leq \omega_c$. 
    However, since $\omega_0 = \omega_c$, since $\omega_\eta$ is non-decreasing, $\omega_\eta \geq \omega_c$. Hence, $\omega_\eta = \omega_c$. This proves the third statement.
\end{proof}
\noindent We call $\eta^*$ the \textbf{threshold efficiency}. 
In order to achieve a higher expected utility than $\omega_c$, we require $\eta > \eta^*$. We mention that a related but inequivalent definition of the threshold efficiency was given in~\cite{massar2003violation} where a loss event was treated as an output possibility.

Recall the $\eta$-lossy value can be written as
\begin{equation*}
%\label{eq:omega_q_star_eta}
    \omega_\eta
    =
    \max_{\mathcal S_d}
    \omega_q\!\left[
        \mathcal B_{\mathcal S_d}(\eta)
    \right],
\end{equation*}
where the maximization is over deterministic strategies. The bipartite version of the lossy Bell expression in~\cite{massar2002bell} is given by:
\begin{equation*}
        \begin{aligned}
            \mathcal B_{\mathcal S_d}(\vec\eta) \coloneqq &\eta_1\eta_2 \sum_{a_1,a_2,s_1,s_2}
\beta_{a_1,a_2}^{s_1,s_2}
p_q(a_1,a_2|s_1,s_2) + (1-\eta_1)\eta_2\sum_{a_2,s_1,s_2}\beta_{f_1(s_1),a_2}^{s_1,s_2}p_q(a_2|s_2) \\
            &+ \eta_1(1-\eta_2)\sum_{a_1,s_1,s_2}\beta_{a_1,f_2(s_2)}^{s_1,s_2}p_q(a_1|s_1) + (1-\eta_1)(1-\eta_2)\sum_{s_1,s_2}\beta_{f_1(s_1),f_2(s_2)}^{s_1,s_2}.
        \end{aligned}
        \label{eq:bi_lossyBell}
\end{equation*}
Given a semideterministic strategy $(\mathcal S_q, \mathcal S_d)$, where the functions used in $\mathcal S_d $ are $f_1, f_2$, define the \textbf{marginal term} in the lossy Bell expression by 
\begin{equation}
\label{eq:marginal}
    m_{\mathcal S_d}(\mathcal S_q) \coloneqq 
   % m_{\mathcal S_d}(p)= 
   \sum_{a_2,s_1,s_2}\beta_{f_1(s_1),a_2}^{s_1,s_2}p_q(a_2|s_2) 
            + \sum_{a_1,s_1,s_2}\beta_{a_1,f_2(s_2)}^{s_1,s_2}p_q(a_1|s_1),
\end{equation}
where $p_q$ is the behavior induced by $\mathcal S_q$. We also write $m_{\mathcal S_d}(p_q) \coloneqq m_{\mathcal S_d}(\mathcal S_q)$.

Our first observation is the following. 
\begin{prp}
    Let $\mathcal S_d$ be an optimal deterministic strategy for a Bell expression \(\mathcal B\). Then, $\omega_q\!\left(
        \mathcal B_{\mathcal S_d}(\eta)
    \right)$ is a convex function of $\eta$ for $\eta \in [0,1]$.
\end{prp}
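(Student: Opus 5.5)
The plan is to write $\omega_q(\mathcal B_{\mathcal S_d}(\eta))$ as a supremum over quantum behaviors of functions of $\eta$, and to show that each of these functions is convex once the optimality of $\mathcal S_d$ is used. Under symmetric loss, the bipartite lossy Bell expression \Cref{eq:bi_lossyBell} gives, for a fixed quantum behavior $p_q$,
\begin{equation*}
F_{p_q}(\eta) \coloneqq \eta^2\,\omega(\mathcal B,p_q) + \eta(1-\eta)\,m_{\mathcal S_d}(p_q) + (1-\eta)^2\,\omega_c ,
\end{equation*}
where $m_{\mathcal S_d}$ is the marginal term \Cref{eq:marginal}. Since $\mathcal S_d$ is optimal, the last term is exactly $\omega_c$. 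Each $F_{p_q}$ is a quadratic polynomial in $\eta$, and $\omega_q(\mathcal B_{\mathcal S_d}(\eta)) = \sup_{p_q} F_{p_q}(\eta)$. The supremum of convex functions is convex. The difficulty is that $F_{p_q}$ need not be convex for every $p_q$: its leading coefficient is $\omega(\mathcal B,p_q) - m_{\mathcal S_d}(p_q) + \omega_c$, which may be negative.

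The key step is to show that we may restrict to behaviors whose $F_{p_q}$ is convex, without changing the supremum. The tool is the observation that $m_{\mathcal S_d}(p_q) \le 2\omega_c$ for every quantum (indeed every no-signaling) behavior. To see this, note that $\sum_{a_2,s_1,s_2}\beta_{f_1(s_1),a_2}^{s_1,s_2}p_q(a_2|s_2)$ is the value of $\mathcal B$ on the product behavior $\delta_{a_1,f_1(s_1)}\,p_q(a_2|s_2)$. This behavior is local: it is a convex combination of deterministic behaviors, because the marginal $p_q(a_2|s_2)$ is a mixture of deterministic local functions. Its value is therefore at most $\omega_c$, and the same holds for the other half of $m_{\mathcal S_d}$.

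Next we use the monotonicity construction from the proof of the previous proposition, applied now at a fixed $\eta$. Fix $\eta$ and any $p_q$. Let $G(\eta) \coloneqq \sup_{p_q}F_{p_q}(\eta)$; it is at least $\omega_c$, since $p_q$ can be taken to be the deterministic behavior of $\mathcal S_d$ itself, and then $F_{p_q}\equiv\omega_c$. The alternative I would try first avoids any restriction and expresses $G$ directly as an upper envelope of convex functions. Rewrite
\begin{equation*}
F_{p_q}(\eta) = \omega_c + \eta\bigl(m_{\mathcal S_d}(p_q) - 2\omega_c\bigr) + \eta^2\bigl(\omega(\mathcal B,p_q) - m_{\mathcal S_d}(p_q) + \omega_c\bigr).
\end{equation*}
The supremum is unchanged if we also allow every behavior $p_q^{(t)}$ obtained by the ``local dilution'' of that proof, in which each party measures with probability $t$ and otherwise outputs $f_i(s_i)$. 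By the computation in that proof, $F_{p_q^{(t)}}(\eta) = F_{p_q}(t\eta)$. Hence $G(\eta) = \sup_{p_q}\sup_{t\in[0,1]} F_{p_q}(t\eta) = \sup_{p_q}\max_{\lambda\in[0,\eta]} F_{p_q}(\lambda)$. For a concave quadratic $F_{p_q}$, i.e.\ one with negative leading coefficient, the slope at $0$ is $m_{\mathcal S_d}(p_q)-2\omega_c \le 0$, so $F_{p_q}$ is non-increasing on $[0,1]$. Then $\max_{\lambda\le\eta}F_{p_q}(\lambda) = F_{p_q}(0) = \omega_c$, a constant and hence convex function. For a convex quadratic $F_{p_q}$ with nonpositive slope at $0$, the running maximum $\max_{\lambda\le\eta}F_{p_q}(\lambda)$ equals $\max(\omega_c, F_{p_q}(\eta))$, because a convex function on an interval attains its maximum at an endpoint. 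This is a maximum of convex functions, so it is convex. Therefore $G$ is a supremum of convex functions of $\eta$, hence convex.

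The main obstacle is the bound $m_{\mathcal S_d}(p_q) \le 2\omega_c$, which is where the optimality of $\mathcal S_d$ enters, together with checking carefully that the dilution identity $F_{p_q^{(t)}}(\eta)=F_{p_q}(t\eta)$ holds exactly. That identity is what lets us replace each quadratic by its running maximum. Once both are established, convexity follows from the elementary case analysis above.
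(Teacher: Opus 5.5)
Your proof is correct, and its skeleton matches the paper's. You express $\omega_q(\mathcal B_{\mathcal S_d}(\eta))$ as a supremum of the quadratics $F_p$, use $m_{\mathcal S_d}(p)\le 2\omega_c$, and finish with ``a supremum of convex functions is convex.'' Your decomposition of the single-party marginal into deterministic functions is a perfectly good substitute for the paper's appeal to Fine's theorem.

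Where you differ is in how the non-convex members of the family are neutralized. The paper \emph{prunes} them. It bounds $F_p(\eta)\le\eta^2\omega(\mathcal B,p)+(1-\eta^2)\omega_c$, so for $\eta>0$ every $p$ with $\omega(\mathcal B,p)<\omega_c$ lies strictly below the value $\omega_c$ of the trivial strategy and can be discarded. On what remains, $F_p''\ge 2(\omega(\mathcal B,p)-\omega_c)\ge 0$.

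You instead keep every member but replace it by its running maximum, via the dilution identity $F_{p^{(t)}}(\eta)=F_p(t\eta)$. That identity is correct: the coefficients of $\omega(\mathcal B,p)$, $m_{\mathcal S_d}(p)$ and $\omega_c$ become $(t\eta)^2$, $t\eta(1-t\eta)$ and $(1-t\eta)^2$. This packages monotonicity and convexity into one formula. However, your case analysis only ever uses $t\in\{0,1\}$: the running maximum comes out as $\omega_c$ or $\max(\omega_c,F_p(\eta))$. Since $t=0$ is just the trivial strategy reproducing $\mathcal S_d$, you can drop the dilution and write $\sup_p F_p(\eta)=\sup_p\max(\omega_c,F_p(\eta))$ directly. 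The concave members then disappear, because they are non-increasing from $F_p(0)=\omega_c$. That is the paper's pruning in a slightly different guise. Your criterion is ``negative leading coefficient,'' which forces $\omega(\mathcal B,p)<m_{\mathcal S_d}(p)-\omega_c\le\omega_c$, in place of the paper's ``$\omega(\mathcal B,p)<\omega_c$.''

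Three small points for the write-up.
\begin{enumerate}
\item The bound $m_{\mathcal S_d}(p)\le 2\omega_c$ holds for every deterministic $\mathcal S_d$. Optimality enters only through $F_p(0)=\omega(\mathcal B,\mathcal S_d)=\omega_c$, which is what turns that bound into a nonpositive slope at $\eta=0$. For suboptimal $\mathcal S_d$, the slope $m_{\mathcal S_d}(p)-2\omega(\mathcal B,\mathcal S_d)$ need not be nonpositive, and your concave case fails.
\item Your middle paragraph starts a restriction argument and then abandons it for the envelope argument; keep just one.
\item The hypothesis ``nonpositive slope at $0$'' is unnecessary in the convex case, since a convex function on $[0,\eta]$ always attains its maximum at an endpoint.
\end{enumerate}
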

\begin{proof}
    Given a quantum strategy $\mathcal{S}_q$, define
    \begin{equation*}
        g_{\mathcal{S}_q}(\eta) \coloneqq  \omega(\mathcal B_{\mathcal S_d}(\eta), \mathcal S_q) = \eta^2 \omega( \mathcal B,\mathcal{S}_q) + \eta(1-\eta)m_{\mathcal S_d}(\mathcal S_q) + (1-\eta)^2 \omega_c.
    \end{equation*}
    Here we used that
\begin{equation*}
        \sum_{s_1,s_2}
    \beta_{f_1(s_1),f_2(s_2)}^{s_1,s_2}
    =
    \omega_c,
\end{equation*}
because $\mathcal S_d$ is an optimal deterministic strategy for $\mathcal B$. Note that
$$
\omega_q(\mathcal B_{\mathcal S_d}(\eta)) = \sup_{\mathcal S_q} g_{\mathcal S_q}(\eta).
$$

The second derivative of $g_{\mathcal{S}_q}$ with respect to $\eta$ is 
    \begin{equation*}
        g_{\mathcal{S}_q}''(\eta) = 2\omega( \mathcal B, \mathcal{S}_q) - 2 m_{\mathcal S_d}(\mathcal S_q) + 2 \omega_c.
    \end{equation*}
    When loss happens to only one-party, by Fine's theorem~\cite{fine1982hidden}, $p_{\mathcal{S}_d} (a_1|s_1) p_{\mathcal{S}_q}(a_2|s_2)$ and $p_{\mathcal{S}_q} (a_1|s_1) p_{\mathcal{S}_d}(a_2|s_2)$ are both classical behaviors. Thus we have 
    \begin{equation*}
        m_{\mathcal S_d}(\mathcal S_q) \le 2\omega_c.
    \end{equation*}
    It follows that
    \begin{equation*}
        g_{\mathcal{S}_q}''(\eta) = 2\omega(\mathcal B, \mathcal S_q) - 2m_{\mathcal S_d}(\mathcal S_q) + 2 \omega_c \ge 2\omega(\mathcal B, \mathcal S_q) - 4\omega_c + 2 \omega_c = 2\omega(\mathcal B, \mathcal S_q) - 2\omega_c.
    \end{equation*}

Let $\eta > 0$. We claim that if 
$\omega(\mathcal B, \mathcal S_q)<\omega_c$, we cannot attain the $\eta$-lossy value. Indeed, using
$m_{\mathcal S_d}(\mathcal S_q)\le 2\omega_c$, we obtain
\begin{equation*}
\begin{aligned}
g_{\mathcal S_q}(\eta)
&\le
\eta^2
\omega(\mathcal B, \mathcal S_q)
+
2\eta(1-\eta)\omega_c
+
(1-\eta)^2\omega_c
\\
&=
\eta^2
\omega(\mathcal B, \mathcal S_q)
+
(1-\eta^2)\omega_c .
\end{aligned}
\end{equation*}
Thus, if
$\omega(\mathcal B, \mathcal S_q)<\omega_c$, then
$g_{\mathcal S_q}(\eta)<\omega_c$ for every $\eta>0$. 
However,
a trivial quantum strategy reproducing the deterministic strategy $\mathcal S_d$ achieves \(\omega_c\). Thus when we take the supremum over $\mathcal S_q$, we can also restrict to strategies for which $\omega(\mathcal B, \mathcal S_q) \geq \omega_c$.

If $\eta = 0$, $g_{\mathcal S_q}(\eta) = \omega_c$ is constant and does not depend on $\mathcal S_q$. Thus in this case we can also restrict to strategies for which $\omega(\mathcal B, \mathcal S_q) \geq \omega_c$.

Hence, we can write
\begin{equation}
\label{eq:lossy_value_given_S_d}
    \omega_q
        (\mathcal B_{\mathcal S_d}(\eta))
    =
    \sup_{\mathcal S_q : \omega(\mathcal B, \mathcal S_q) \geq \omega_c}
    g_{\mathcal S_q}(\eta),
\end{equation}
With this new domain for $\mathcal S_q$,
\begin{equation*}
    g_{\mathcal{S}_q}''(\eta) \ge 2\omega(\mathcal B, \mathcal S_q) - 2 \omega_c \ge 0.
\end{equation*}
Thus for all $\mathcal S_q$ in this domain, $g_{\mathcal S_q}(\eta)$ is convex in
$\eta$. Since a pointwise supremum of convex functions is convex~\cite[Sec.~3.2.3]{boyd2004convex}, it
follows from \Cref{eq:lossy_value_given_S_d} that
$
    \omega_q(
        \mathcal B_{\mathcal S_d}(\eta))
$
is a convex function of $\eta$.
\end{proof}

To study how the optimal deterministic fallback strategy
depends on $\eta$, we introduce an auxiliary function.
Recall that $\mathcal Q$ denotes the set of finite-dimensional quantum
behaviors and $\overline{\mathcal Q}$ its closure.
For a deterministic strategy $\mathcal S_d$, define the set of
attainable marginal contributions by
\begin{equation*}
%\label{eq:M_Sd}
    \mathcal M_{\mathcal S_d}
    \coloneqq
    \left\{
        m\in\mathbb R :
        \exists\,p \in \overline{\mathcal Q}\ \text{such that}\ 
        m=m_{\mathcal S_d}(p)
    \right\}.
\end{equation*}
Since $\mathcal Q$ is contained in the no-signaling set and the
no-signaling set is closed~\cite{brunner2014bell}, we have
$\overline{\mathcal Q}\subseteq\mathcal{NS}$. Hence
$m_{\mathcal S_d}(p)$ is well defined for every
$p\in\overline{\mathcal Q}$ via~\Cref{eq:marginal}.
Moreover, since $\mathcal Q$ is bounded and convex,
$\overline{\mathcal Q}$ is a compact and convex subset of the
finite-dimensional Euclidean space of behaviors. Since
$m_{\mathcal S_d}$ is linear in the behavior,
$\mathcal M_{\mathcal S_d}$ is a closed, finite interval in $\R$.

For each $m\in\mathcal M_{\mathcal S_d}$, define
\begin{equation*}
%\label{eq:Q_Sd}
    Q_{\mathcal S_d}(m)
    \coloneqq
    \max_{\substack{
        p\in\overline{\mathcal Q}:\\
        m_{\mathcal S_d}(p)=m
    }}
    \omega(\mathcal B,p).
\end{equation*}
Thus, $Q_{\mathcal S_d}(m)$ is the largest value of the original Bell
expression among behaviors in $\overline{\mathcal Q}$ with
marginal term equal to $m$.

\begin{prp}
For any deterministic strategy $\mathcal S_d$, the function
$Q_{\mathcal S_d}$ is concave on $\mathcal M_{\mathcal S_d}$.
\end{prp}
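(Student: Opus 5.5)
The plan is the standard argument that the value function of a linear objective over a convex set, parametrized by a linear constraint, is concave.

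Fix $\mathcal S_d$. First we check that $Q_{\mathcal S_d}$ is well defined. For each $m\in\mathcal M_{\mathcal S_d}$, the feasible set $\{p\in\overline{\mathcal Q}: m_{\mathcal S_d}(p)=m\}$ is nonempty by definition of $\mathcal M_{\mathcal S_d}$. It is also compact, being the intersection of the compact set $\overline{\mathcal Q}$ with the preimage of a point under the continuous linear map $m_{\mathcal S_d}$. Since $\omega(\mathcal B,\cdot)$ is linear and hence continuous, the maximum is attained.

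Now take $m,m'\in\mathcal M_{\mathcal S_d}$ and $t\in[0,1]$. Choose maximizers $p,p'\in\overline{\mathcal Q}$ with $m_{\mathcal S_d}(p)=m$, $m_{\mathcal S_d}(p')=m'$, $\omega(\mathcal B,p)=Q_{\mathcal S_d}(m)$ and $\omega(\mathcal B,p')=Q_{\mathcal S_d}(m')$. Form $p_t\coloneqq tp+(1-t)p'$.

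Convexity of $\overline{\mathcal Q}$ (the closure of the convex set $\mathcal Q$, as noted just before the statement) gives $p_t\in\overline{\mathcal Q}$. Linearity of $m_{\mathcal S_d}$ on $\mathcal{NS}\supseteq\overline{\mathcal Q}$ gives $m_{\mathcal S_d}(p_t)=tm+(1-t)m'$, so in particular $tm+(1-t)m'\in\mathcal M_{\mathcal S_d}$. The marginals of a convex combination of no-signaling behaviors are the same convex combination of the marginals, so \Cref{eq:marginal} is indeed affine on $\mathcal{NS}$. Hence $p_t$ is feasible for the problem defining $Q_{\mathcal S_d}(tm+(1-t)m')$.

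Using linearity of $\omega(\mathcal B,\cdot)$ we conclude
\begin{equation*}
Q_{\mathcal S_d}(tm+(1-t)m')\ \ge\ \omega(\mathcal B,p_t)=t\,Q_{\mathcal S_d}(m)+(1-t)\,Q_{\mathcal S_d}(m'),
\end{equation*}
which is concavity.

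The only point needing care is the justification that $\overline{\mathcal Q}$ is convex and contained in $\mathcal{NS}$, so that $m_{\mathcal S_d}$ is defined and linear there. The paper has already asserted both facts. If one wanted to double-check convexity of $\mathcal Q$, one would use the direct-sum construction: adjoin a classical flag register shared by both parties, which realizes mixtures with finite-dimensional strategies. The closure of a convex set is convex. Beyond this, the argument is routine.
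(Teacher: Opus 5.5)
Your proof is correct and is essentially the paper's argument. You pick maximizers at the two marginal values, note that their convex combination lies in $\overline{\mathcal Q}$ with marginal term $km_1+(1-k)m_2$, and use linearity of $\omega(\mathcal B,\cdot)$ to bound $Q_{\mathcal S_d}$ from below at the mixed point. Your extra checks (that the maximum is attained by compactness, and that $\mathcal Q$ is convex via a direct sum) are sound details the paper takes for granted. The only nitpick is that, since the paper's strategies use a pure state, the flag should be the coherent $\sqrt{t}\ket{00}+\sqrt{1-t}\ket{11}$ rather than a classical register; under block-diagonal measurements this gives the same mixture.
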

\begin{proof}
Let $m_1,m_2\in\mathcal M_{\mathcal S_d}$ and let $k\in[0,1]$.
Choose behaviors
$p_1,p_2 \in \overline{\mathcal Q}$ such that
\begin{equation*}
    m_{\mathcal S_d}(p_i)=m_i
\end{equation*}
and 
\begin{equation*}
    \omega(\mathcal B,p_i)
    =
    Q_{\mathcal S_d}(m_i),
    \qquad i=1,2.
\end{equation*}

Let
$$p(k) \coloneqq k p_1 + (1-k) p_2.$$
Since $\overline{\mathcal Q}$ is convex, $p(k) \in \overline{\mathcal Q}$. As
$\omega(\mathcal B,p)$ is linear in the behavior,
\begin{equation*}
    m_{\mathcal S_d}(p(k))
    =
    km_1+(1-k)m_2
\end{equation*}
and
\begin{equation*}
\begin{aligned}
    \omega(\mathcal B,p(k))
    &=
    k\omega(\mathcal B,p_1)
    +(1-k)\omega(\mathcal B, p_2)
    \\
    &=
    kQ_{\mathcal S_d}(m_1)
    +(1-k)Q_{\mathcal S_d}(m_2)
    .
\end{aligned}
\end{equation*}
By the definition of $Q_{\mathcal S_d}$,
\begin{equation*}
    Q_{\mathcal S_d}\!\left(km_1+(1-k)m_2\right)
    \ge \omega(\mathcal B,p(k))
    %kQ_{\mathcal S_d}(m_1)
    %+(1-k)Q_{\mathcal S_d}(m_2)
   .
\end{equation*}
Thus,
\begin{equation*}
    Q_{\mathcal S_d}\!\left(km_1+(1-k)m_2\right)
    \ge
    kQ_{\mathcal S_d}(m_1)
    +(1-k)Q_{\mathcal S_d}(m_2).
\end{equation*}
Hence $Q_{\mathcal S_d}$ is concave on
$\mathcal M_{\mathcal S_d}$.

\end{proof}

This concavity property allows us to characterize the marginal
terms that optimize the $\eta$-lossy value for a fixed deterministic strategy $\mathcal S_d$.
Using $Q_{\mathcal S_d}$, we can write
\begin{equation}
\label{eq:omega_q_eta_Q}
\omega_q\!\left(
    \mathcal B_{\mathcal S_d}(\eta)
\right)
=
\max_{m\in\mathcal M_{\mathcal S_d}}
    \eta^2 Q_{\mathcal S_d}(m)
    +
    \eta(1-\eta)m
    +
    (1-\eta)^2\omega(\mathcal B,\mathcal S_d)
.
\end{equation}
For $\eta>0$, the last term in
\Cref{eq:omega_q_eta_Q} is independent of $m$, and multiplication by
the positive factor $\eta^2$ does not affect the optimization.
Thus it suffices to consider
\begin{equation}
\label{eq:aux_m_optimization}
\max_{m\in\mathcal M_{\mathcal S_d}}
    Q_{\mathcal S_d}(m)
    +
    \frac{1-\eta}{\eta}m .
\end{equation}
If $\mathcal M_{\mathcal S_d}$ is a singleton, the optimization in
\Cref{eq:aux_m_optimization} is trivial. Otherwise, 
let $m_-, m_+ \in \R$ such that
\[
\mathcal M_{\mathcal S_d}=[m_-,m_+].
\]
Since $Q_{\mathcal S_d}$ is concave, its left and right derivatives, 
$Q'_{\mathcal S_d,-}(m)$ and
$Q'_{\mathcal S_d,+}(m)$ respectively, exist at every $m\in(m_-,m_+)$ 
of
$\mathcal M_{\mathcal S_d}$ and $Q'_{\mathcal S_d,-}(m) \geq Q'_{\mathcal S_d,+}(m)$.
At the endpoints, the corresponding one-sided derivatives
$Q'_{\mathcal S_d,+}(m_-)$ and
$Q'_{\mathcal S_d,-}(m_+)$ exist as well~\cite[Thm.~23.1]{rockafellar1970convex}.

It then follows from the concavity of
$Q_{\mathcal S_d}$ that a point
$m\in\mathcal M_{\mathcal S_d}$ maximizes \Cref{eq:aux_m_optimization} if and only if
\begin{equation*}
\begin{cases}
Q'_{\mathcal S_d,+}(m_-)
    \leq -\dfrac{1-\eta}{\eta},
    & m=m_-, \\[0.8em]
-\dfrac{1-\eta}{\eta}
    \in
    \left[
        Q'_{\mathcal S_d,+}(m),
        Q'_{\mathcal S_d,-}(m)
    \right],
    & m\in(m_-,m_+), \\[0.8em]
Q'_{\mathcal S_d,-}(m_+)
    \geq -\dfrac{1-\eta}{\eta},
    & m=m_+.
\end{cases}
\end{equation*}
We call any such $m$ an \textbf{optimal marginal value} at efficiency
$\eta$, and denote the set of all optimal marginal values by
\begin{equation*}
%\label{eq:optimal_marginal_set}
    \mathcal M_{\mathcal S_d}^{\mathrm{opt}}(\eta)
    \coloneqq
    \operatorname*{arg\,max}_{m\in\mathcal M_{\mathcal S_d}}
    \left\{
        Q_{\mathcal S_d}(m)
        +
        \frac{1-\eta}{\eta}m
    \right\}.
\end{equation*}

\begin{lem}
\label[lem]{lem:opt_marg}
    Let $\mathcal S_d$ be a deterministic strategy.
For any $0<\eta_1<\eta_2\le 1$, if
\[
    m_1\in\mathcal M_{\mathcal S_d}^{\mathrm{opt}}(\eta_1),
    \qquad
    m_2\in\mathcal M_{\mathcal S_d}^{\mathrm{opt}}(\eta_2),
\]
then
\[
    m_1\ge m_2.
\]
In other words, the optimal marginal values are non-increasing with
the efficiency $\eta$.
\end{lem}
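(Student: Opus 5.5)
The plan is the standard exchange (monotone comparative statics) argument. Write $t_j \coloneqq \frac{1-\eta_j}{\eta_j}$ for $j=1,2$. Since $t\mapsto (1-\eta)/\eta$ is strictly decreasing on $(0,1]$, the hypothesis $0<\eta_1<\eta_2\le 1$ gives $t_1>t_2\ge 0$. By definition of $\mathcal M_{\mathcal S_d}^{\mathrm{opt}}(\eta)$, the point $m_j$ maximizes $h_{t_j}(m)\coloneqq Q_{\mathcal S_d}(m)+t_j m$ over $\mathcal M_{\mathcal S_d}$. We may assume $\mathcal M_{\mathcal S_d}$ is not a singleton, since otherwise $m_1=m_2$ trivially. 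The maxima are attained and $Q_{\mathcal S_d}$ is finite on $\mathcal M_{\mathcal S_d}$, because $Q_{\mathcal S_d}(m)$ is a maximum of a continuous function over a nonempty compact slice of $\overline{\mathcal Q}$.

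The key step is to write down the two optimality inequalities. Optimality of $m_1$ at $t_1$ gives
\[
Q_{\mathcal S_d}(m_1)+t_1 m_1 \ge Q_{\mathcal S_d}(m_2)+t_1 m_2,
\]
and optimality of $m_2$ at $t_2$ gives
\[
Q_{\mathcal S_d}(m_2)+t_2 m_2 \ge Q_{\mathcal S_d}(m_1)+t_2 m_1.
\]
Adding them cancels the $Q_{\mathcal S_d}$ terms and yields $(t_1-t_2)(m_1-m_2)\ge 0$. Since $t_1-t_2>0$, we conclude $m_1\ge m_2$.

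This argument does not actually need the concavity of $Q_{\mathcal S_d}$ from the preceding proposition. It uses only that the objective has increasing differences in $(m,t)$ and that $t$ decreases in $\eta$. I expect no real obstacle. The only points to state carefully are the finiteness of $Q_{\mathcal S_d}$, which makes the cancellation legitimate, and the strict monotonicity of $\eta\mapsto(1-\eta)/\eta$.

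As an optional consistency check, one can also phrase the result through the subgradient characterization displayed before the lemma. By concavity, $-t$ lies in the superdifferential at $m$, and superdifferentials of a concave function are monotone decreasing in $m$. So a larger $-t$, i.e. a larger $\eta$, forces a smaller or equal $m$. The exchange argument above is the cleaner route, so that is the one I would write.
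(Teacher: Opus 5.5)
Your proof is correct and is essentially the paper's own argument: write the two optimality inequalities at $\eta_1$ and $\eta_2$, add them to cancel the $Q_{\mathcal S_d}$ terms, and use that $\eta\mapsto(1-\eta)/\eta$ is strictly decreasing on $(0,1]$. You are also right that this step uses only the finiteness of $Q_{\mathcal S_d}$, not its concavity.
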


\begin{proof}
Let $0<\eta_1<\eta_2\le 1$, and choose
\begin{equation*}
    m_i\in\mathcal M_{\mathcal S_d}^{\mathrm{opt}}(\eta_i),
    \qquad i=1,2.
\end{equation*}
By the optimality of $m_1$ at efficiency $\eta_1$,
\begin{equation*}
    Q_{\mathcal S_d}(m_1)
    +
    \frac{1-\eta_1}{\eta_1}m_1
    \ge
    Q_{\mathcal S_d}(m_2)
    +
    \frac{1-\eta_1}{\eta_1}m_2.
\end{equation*}
Similarly, the optimality of $m_2$ at efficiency $\eta_2$ gives
\begin{equation*}
    Q_{\mathcal S_d}(m_2)
    +
    \frac{1-\eta_2}{\eta_2}m_2
    \ge
    Q_{\mathcal S_d}(m_1)
    +
    \frac{1-\eta_2}{\eta_2}m_1.
\end{equation*}
Adding the two inequalities yields
\begin{equation*}
    \left(
        \frac{1-\eta_1}{\eta_1}
        -
        \frac{1-\eta_2}{\eta_2}
    \right)
    (m_1-m_2)
    \ge 0.
\end{equation*}
Since the function
\[
    \eta\longmapsto\frac{1-\eta}{\eta}
\]
is strictly decreasing on $(0,1]$, we have
\begin{equation*}
    \frac{1-\eta_1}{\eta_1}
    -
    \frac{1-\eta_2}{\eta_2}
    >0.
\end{equation*}
Therefore,
\begin{equation*}
    m_1\ge m_2,
\end{equation*}
which proves the claim.
\end{proof}

For each deterministic strategy $\mathcal S_d$, define
\begin{equation}
\label{eq:m_Sd}
    m_{\mathcal S_d}^{q}
    \coloneqq
    \max
    \left\{
        m\in\mathcal M_{\mathcal S_d}:
        Q_{\mathcal S_d}(m)=\omega_q(\mathcal B)
    \right\}.
\end{equation}
The set in \Cref{eq:m_Sd} is nonempty and compact. Indeed,
$\overline{\mathcal Q}$ is compact, and the continuous Bell functional
$\mathcal B$ attains its maximum $\omega_q(\mathcal B)$ on
$\overline{\mathcal Q}$. Thus, $m_{\mathcal S_d}^{q}$ is the largest
marginal contribution associated with $\mathcal S_d$ among all optimal
quantum behaviors $p$ in $\overline{\mathcal Q}$ satisfying $\omega(\mathcal B,p) = \omega_q(\mathcal{B})$. We now prove a robustness theorem that shows an optimal fallback strategy remains optimal in a neighborhood of $\eta=1$.

\begin{thm}
\label[thm]{thm:optimal_fallback_near_one}
Let $\mathcal B$ be a Bell expression. Suppose that there exists a
unique deterministic strategy $\mathcal S_d^0$ maximizing
$m_{\mathcal S_d}^{q}$, i.e.,
\begin{equation*}
    m_{\mathcal S_d^0}^{q}
    >
    m_{\mathcal S_d}^{q}
    \qquad
    \text{for every }
    \mathcal S_d\neq\mathcal S_d^0.
\end{equation*}
Then there exists $\varepsilon>0$ such that
$\mathcal S_d^0$ is the unique optimal fallback strategy
for every $\eta\in(1-\varepsilon,1)$.
\end{thm}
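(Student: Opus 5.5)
Set $t \coloneqq (1-\eta)/\eta$, so that $t\to 0^+$ as $\eta\to 1^-$. For $\eta>0$ we have, by \Cref{eq:omega_q_eta_Q},
\[
\omega_q(\mathcal B_{\mathcal S_d}(\eta)) = \eta^2\Big[\max_{m\in\mathcal M_{\mathcal S_d}} \big(Q_{\mathcal S_d}(m) + t m\big) + t^2 c_{\mathcal S_d}\Big],
\qquad c_{\mathcal S_d}\coloneqq \omega(\mathcal B,\mathcal S_d).
\]
Define $F_{\mathcal S_d}(t) \coloneqq \max_{m}(Q_{\mathcal S_d}(m)+tm) + t^2 c_{\mathcal S_d}$. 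Since there are finitely many deterministic strategies, it suffices to show that $F_{\mathcal S_d^0}(t) > F_{\mathcal S_d}(t)$ for each fixed $\mathcal S_d\ne\mathcal S_d^0$ and all sufficiently small $t>0$. We can then take $\varepsilon$ to be the minimum of the finitely many resulting thresholds.

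\textbf{Lower bound for $\mathcal S_d^0$.} Choose $p_0\in\overline{\mathcal Q}$ with $\omega(\mathcal B,p_0)=\omega_q(\mathcal B)$ and $m_{\mathcal S_d^0}(p_0)=m^q_{\mathcal S_d^0}$. Plugging $m=m^q_{\mathcal S_d^0}$ into the max gives
\[
F_{\mathcal S_d^0}(t)\ge \omega_q(\mathcal B) + t\, m^q_{\mathcal S_d^0} + t^2 c_{\mathcal S_d^0}.
\]

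\textbf{Upper bound for $\mathcal S_d\neq\mathcal S_d^0$.} We need $\max_m (Q_{\mathcal S_d}(m)+tm) \le \omega_q(\mathcal B) + t\,(m^q_{\mathcal S_d}+o(1))$ as $t\to0^+$. For each $t$, let $m_t$ be a maximizer, which exists by compactness and upper semicontinuity of $Q$. By \Cref{lem:opt_marg}, $m_t$ is monotone in $t$, so it converges as $t\to 0^+$ to some limit $\bar m$. Any limit point of maximizers is a maximizer of $Q_{\mathcal S_d}$, because the objective converges uniformly to $Q_{\mathcal S_d}$ and $Q_{\mathcal S_d}$ is upper semicontinuous. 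The maximum value of $Q_{\mathcal S_d}$ is $\omega_q(\mathcal B)$, so $Q_{\mathcal S_d}(\bar m)=\omega_q(\mathcal B)$ and hence $\bar m\le m^q_{\mathcal S_d}$. Combining this with $Q_{\mathcal S_d}(m_t)\le\omega_q(\mathcal B)$ gives
\[
\max_m\big(Q_{\mathcal S_d}(m)+tm\big)\le \omega_q(\mathcal B) + t\, m_t = \omega_q(\mathcal B) + t\big(m^q_{\mathcal S_d}+o(1)\big).
\]

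\textbf{Upper semicontinuity of $Q$.} This is needed above and I would check it directly. Take $m_k\to m$ with $Q(m_k)$ attained at $p_k$. By compactness of $\overline{\mathcal Q}$, pass to a convergent subsequence $p_k\to p$. Continuity of the linear functionals gives $m_{\mathcal S_d}(p)=m$ and $\omega(\mathcal B,p)=\lim Q(m_k)$. Therefore $Q(m)\ge\limsup Q(m_k)$.

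\textbf{Comparison.} Set $\delta\coloneqq m^q_{\mathcal S_d^0}-m^q_{\mathcal S_d}>0$. Then
\[
F_{\mathcal S_d^0}(t)-F_{\mathcal S_d}(t)\ge t\big(\delta - o(1)\big) - t^2\,|c_{\mathcal S_d^0}-c_{\mathcal S_d}|,
\]
which is positive for small $t>0$. This yields strict optimality, and hence uniqueness, of $\mathcal S_d^0$ for all $\eta\in(1-\varepsilon,1)$.

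The main obstacle is the upper bound step: controlling the first-order behaviour of $\max_m (Q+tm)$ through the convergence of optimal marginal values to $m^q_{\mathcal S_d}$. This uses monotonicity from \Cref{lem:opt_marg} together with compactness and upper semicontinuity, since $Q$ need not be differentiable or continuous at the endpoints.
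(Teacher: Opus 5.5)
Your proof is correct, but it takes a different route from the paper's.

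\textbf{The paper's route.} The paper works directly with behaviors. It writes $\omega_q(\mathcal B_{\mathcal S_d}(\eta))$ as a maximum over the compact set $\overline{\mathcal Q}$ of an objective that is smooth in $\eta$. It then invokes Danskin's theorem to get the exact left derivative $2\omega_q(\mathcal B)-m^q_{\mathcal S_d}$ at $\eta=1$ for every fallback strategy. This gives the first-order expansion $\omega_q(\mathcal B)+(1-\eta)(m^q_{\mathcal S_d}-2\omega_q(\mathcal B))+o(1-\eta)$, which it compares across strategies. The function $Q_{\mathcal S_d}$ and \Cref{lem:opt_marg} are not used in that proof.

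\textbf{Your route.} You pass to the one-dimensional reformulation through $Q_{\mathcal S_d}$ and $t=(1-\eta)/\eta$. You then prove by hand only the two one-sided estimates you need:
\begin{enumerate}
\item the trivial lower bound for $\mathcal S_d^0$, obtained by plugging in $m^q_{\mathcal S_d^0}$;
\item the upper bound for each competitor, from the fact that the maximizers $m_t$ can only accumulate at maximizers of $Q_{\mathcal S_d}$, and these are $\le m^q_{\mathcal S_d}$ by definition.
\end{enumerate}
This is essentially the relevant half of Danskin's formula, proved from scratch.

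\textbf{What each buys.} Your version is self-contained, needs no external envelope theorem, and explicitly verifies the upper semicontinuity of $Q_{\mathcal S_d}$, which the paper leaves implicit. The paper's version is shorter and gives the exact left derivative for every $\mathcal S_d$ at once.

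\textbf{Two minor remarks.}
\begin{enumerate}
\item The appeal to \Cref{lem:opt_marg} is not actually needed. Compactness of $\mathcal M_{\mathcal S_d}$ plus your limit-point argument already gives $\limsup_{t\to 0^+} m_t\le m^q_{\mathcal S_d}$ without any monotonicity.
\item Once upper semicontinuity is established, concavity makes $Q_{\mathcal S_d}$ continuous on the whole closed interval $\mathcal M_{\mathcal S_d}$. Your closing caveat about possible discontinuity at the endpoints is therefore unnecessary, though harmless.
\end{enumerate}
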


\begin{proof}
For any deterministic strategy $\mathcal S_d$,
\begin{equation*}
\begin{aligned}
\omega_q\!\left(
    \mathcal B_{\mathcal S_d}(\eta)
\right)
=
\max_{p\in\overline{\mathcal Q}}
\eta^2\omega(\mathcal B,p)
+\eta(1-\eta)m_{\mathcal S_d}(p)
+(1-\eta)^2\omega(\mathcal B,\mathcal S_d).
\end{aligned}
\end{equation*}
For fixed $p$, the derivative of the objective function
with respect to $\eta$ at $\eta=1$ is
\begin{equation*}
    2\omega(\mathcal B,p)-m_{\mathcal S_d}(p).
\end{equation*}

Since $\overline{\mathcal Q}$ is compact, and the objective function is
continuous in $(\eta,p)$ and continuously differentiable in $\eta$,
Danskin's theorem~\cite[Thm.~3.1]{danskin1967theory} applies. At $\eta=1$, the
maximizing behaviors are precisely those satisfying
$\omega(\mathcal B,p)=\omega_q$. Hence the left derivative at $\eta=1$ is
\begin{equation*}
\begin{aligned}
\left.
\frac{d}{d\eta}
\omega_q\!\left(
    \mathcal B_{\mathcal S_d}(\eta)
\right)
\right|_{\eta=1^-}
&=
\min_{\substack{
    p\in\overline{\mathcal Q}:\\
    \omega(\mathcal B,p)=\omega_q(\mathcal B)
}}
\left\{
    2\omega(\mathcal B,p)-m_{\mathcal S_d}(p)
\right\}
\\
&=
2\omega_q(\mathcal B)-m_{\mathcal S_d}^{q},
\end{aligned}
\end{equation*}
where the last equality follows from \Cref{eq:m_Sd}.
Consequently, as $\eta\to1^-$,
\begin{equation}
\label{eq:lossy_expansion_near_one}
\omega_q\!\left(
    \mathcal B_{\mathcal S_d}(\eta)
\right)
=
\omega_q(\mathcal B)
+
(1-\eta)
\left(
    m_{\mathcal S_d}^{q}
    -2\omega_q(\mathcal B)
\right)
+
r(\eta).
\end{equation}
Here, $r(\eta)$ is a remainder term that satisfies
\begin{equation}
\label{eq:remainder}    
\lim_{\eta \to 1^-} \frac{r(\eta)}{1-\eta} = 0.
\end{equation}

Now fix any deterministic strategy
$\mathcal S_d\neq\mathcal S_d^0$. Applying
\Cref{eq:lossy_expansion_near_one} to
$\mathcal S_d^0$ and $\mathcal S_d$ gives
\begin{equation*}
\begin{aligned}
&
\omega_q\!\left(
    \mathcal B_{\mathcal S_d^0}(\eta)
\right)
-
\omega_q\!\left(
    \mathcal B_{\mathcal S_d}(\eta)
\right)
\\
&\qquad =
(1-\eta)
\left(
    m_{\mathcal S_d^0}^{q}
    -
    m_{\mathcal S_d}^{q}
\right)
+
r'(\eta),
\end{aligned}
\end{equation*}
where $r'(\eta)$ also satisfies~\Cref{eq:remainder}.
By assumption,
\begin{equation*}
    m_{\mathcal S_d^0}^{q}
    >
    m_{\mathcal S_d}^{q}.
\end{equation*}
Hence there exists $\varepsilon_{\mathcal S_d}>0$ such that
\begin{equation*}
\omega_q\!\left(
    \mathcal B_{\mathcal S_d^0}(\eta)
\right)
>
\omega_q\!\left(
    \mathcal B_{\mathcal S_d}(\eta)
\right)
\end{equation*}
for every
$\eta\in(1-\varepsilon_{\mathcal S_d},1)$.

Since there are only finitely many deterministic strategies, we may
choose
\begin{equation*}
    \varepsilon
    =
    \min_{\mathcal S_d\neq\mathcal S_d^0}
    \varepsilon_{\mathcal S_d}
    >0.
\end{equation*}
Therefore $\mathcal S_d^0$ is the unique optimal 
fallback strategy for every $\eta\in(1-\varepsilon,1)$.
\end{proof}

The same argument, without the uniqueness assumption in
\Cref{thm:optimal_fallback_near_one}, yields the following corollary.
\begin{cor}
\label[cor]{cor:optimal_fallback_near_one}
Let $\mathcal B$ be a Bell expression, and let
\begin{equation*}
    \mathfrak S^{q}
    :=
    \left\{
        \mathrm {fallback\,strategy\,} \mathcal S_d :
        m_{\mathcal S_d}^{q}
        =
        \max_{\mathcal S_d'} m_{\mathcal S_d'}^{q}
    \right\}.
\end{equation*}
Then there exists $\varepsilon>0$ such that, for every
$\eta\in(1-\varepsilon,1)$, any optimal fallback strategy belongs to
$\mathfrak S^{q}$. Hence, for $\eta$ sufficiently close to $1$, it
suffices to consider the fallback strategies in $\mathfrak S^{q}$
when determining the optimal fallback strategy.
\end{cor}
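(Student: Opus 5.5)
The plan is to reuse the first-order expansion~\Cref{eq:lossy_expansion_near_one} from the proof of \Cref{thm:optimal_fallback_near_one}. That expansion was derived for an arbitrary deterministic strategy $\mathcal S_d$ via Danskin's theorem, and it never used uniqueness. So for every $\mathcal S_d$ we may write
\begin{equation*}
\omega_q\!\left(\mathcal B_{\mathcal S_d}(\eta)\right)
=
\omega_q(\mathcal B)
+(1-\eta)\left(m_{\mathcal S_d}^{q}-2\omega_q(\mathcal B)\right)
+r_{\mathcal S_d}(\eta),
\end{equation*}
where $r_{\mathcal S_d}(\eta)/(1-\eta)\to 0$ as $\eta\to1^-$. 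Let $m^\ast \coloneqq \max_{\mathcal S_d'} m_{\mathcal S_d'}^{q}$. This maximum exists because there are finitely many deterministic strategies, so $\mathfrak S^q$ is nonempty. Fix one element $\mathcal S_d^\ast\in\mathfrak S^q$.

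Next, take any $\mathcal S_d\notin\mathfrak S^q$, so that $m_{\mathcal S_d}^q<m^\ast$. Subtracting the two expansions gives
\begin{equation*}
\omega_q\!\left(\mathcal B_{\mathcal S_d^\ast}(\eta)\right)-\omega_q\!\left(\mathcal B_{\mathcal S_d}(\eta)\right)
=(1-\eta)\left(m^\ast-m_{\mathcal S_d}^q\right)+r'(\eta),
\end{equation*}
where $r'=r_{\mathcal S_d^\ast}-r_{\mathcal S_d}$ is again $o(1-\eta)$. Divide by $1-\eta>0$. The leading coefficient $m^\ast-m_{\mathcal S_d}^q$ is strictly positive and the remainder term tends to $0$, so there is some $\varepsilon_{\mathcal S_d}>0$ such that the difference is strictly positive for every $\eta\in(1-\varepsilon_{\mathcal S_d},1)$. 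On that interval $\mathcal S_d$ is strictly beaten by $\mathcal S_d^\ast$, and therefore $\mathcal S_d$ cannot be an optimal fallback strategy there.

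Finally, set $\varepsilon\coloneqq\min_{\mathcal S_d\notin\mathfrak S^q}\varepsilon_{\mathcal S_d}$. This is a minimum over a finite set, so $\varepsilon>0$. If every strategy lies in $\mathfrak S^q$, the statement is trivial and any $\varepsilon$ works. For $\eta\in(1-\varepsilon,1)$, every strategy outside $\mathfrak S^q$ is strictly suboptimal. An optimal fallback strategy exists at each $\eta$ by finiteness, so every optimal one lies in $\mathfrak S^q$. Restricting the maximization $\omega_\eta=\max_{\mathcal S_d}\omega_q(\mathcal B_{\mathcal S_d}(\eta))$ to $\mathfrak S^q$ therefore does not change its value, which gives the ``it suffices'' clause.

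The main point requiring care is that the expansion must be valid for each $\mathcal S_d$ individually, including members of $\mathfrak S^q$ that tie with one another. This is exactly what the Danskin argument in the theorem's proof provides: it holds pointwise in $\mathcal S_d$, with remainders that depend on $\mathcal S_d$. Those remainders are handled by taking the minimum over the finitely many strategies in the last step. Ties within $\mathfrak S^q$ need no resolution, since the corollary claims only membership in $\mathfrak S^q$ and not uniqueness.
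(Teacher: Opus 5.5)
Your proposal is correct and is essentially the paper's own argument. The paper proves the corollary by rerunning the proof of \Cref{thm:optimal_fallback_near_one}: it uses the strategy-by-strategy expansion \Cref{eq:lossy_expansion_near_one} (which never uses uniqueness), compares each strategy outside $\mathfrak S^{q}$ with a member of $\mathfrak S^{q}$, and takes the minimum of the finitely many $\varepsilon_{\mathcal S_d}$. Your explicit treatment of the degenerate case where every strategy lies in $\mathfrak S^{q}$ is a small detail that the paper leaves implicit.
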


We derive the upper and lower bounds on the threshold efficiency
$\eta^*$ separately.

\begin{prp}
\label[prp]{prp:threshold_efficiency_upper_bound}
Let $\mathcal B$ be a Bell expression for which
$\omega_q>\omega_c$. Choose
\begin{equation*}
%\label{eq:Sd_opt}
    \mathcal S_d^{\mathrm{opt}}
    \in
    \operatorname*{arg\,max}_{\substack{
        \mathcal S_d:\\
        \omega(\mathcal B,\mathcal S_d)=\omega_c
    }}
    m_{\mathcal S_d}^{q},
\end{equation*}
and define
\begin{equation*}
    m_{\mathrm{opt}}
    \coloneqq
    m_{\mathcal S_d^{\mathrm{opt}}}^{q}.
\end{equation*}
Then the threshold efficiency satisfies
\begin{equation}
\label{eq:threshold_efficiency_upper_bound}
    \eta^*
    \le
    \frac{
        2\omega_c-m_{\mathrm{opt}}
    }{
        \omega_q+\omega_c-m_{\mathrm{opt}}
    }.
\end{equation}
\end{prp}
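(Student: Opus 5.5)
We exhibit, for every $\eta$ strictly above the right-hand side of \Cref{eq:threshold_efficiency_upper_bound}, a single semideterministic strategy whose $\eta$-lossy value exceeds $\omega_c$. By the third item of the first proposition of this section, that forces $\eta^*$ to be at most the stated bound.

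Take the fallback strategy $\mathcal S_d^{\mathrm{opt}}$. By the definition \Cref{eq:m_Sd}, there is a behavior $p^\star\in\overline{\mathcal Q}$ with $\omega(\mathcal B,p^\star)=\omega_q$ and $m_{\mathcal S_d^{\mathrm{opt}}}(p^\star)=m_{\mathrm{opt}}$. Using the bipartite lossy Bell expression, and the fact that $\omega(\mathcal B,\mathcal S_d^{\mathrm{opt}})=\omega_c$, this behavior gives
\begin{equation*}
    h(\eta)\coloneqq \eta^2\omega_q+\eta(1-\eta)m_{\mathrm{opt}}+(1-\eta)^2\omega_c .
\end{equation*}
Since $\omega_q(\mathcal B_{\mathcal S_d^{\mathrm{opt}}}(\eta))$ is a supremum over $\mathcal Q$, it equals the maximum over $\overline{\mathcal Q}$ by continuity. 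Hence $\omega_\eta\ge h(\eta)$.

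It remains to find where $h(\eta)>\omega_c$. We compute
\begin{equation*}
    h(\eta)-\omega_c=\eta\left[\eta\left(\omega_q+\omega_c-m_{\mathrm{opt}}\right)-\left(2\omega_c-m_{\mathrm{opt}}\right)\right].
\end{equation*}
By Fine's theorem, as in the convexity proposition, $m_{\mathrm{opt}}\le 2\omega_c$, so the constant $2\omega_c-m_{\mathrm{opt}}$ is nonnegative. Together with $\omega_q>\omega_c$, this makes the coefficient $\omega_q+\omega_c-m_{\mathrm{opt}}$ strictly positive. Therefore, for $\eta>0$, the bracket is positive exactly when
\begin{equation*}
    \eta>\eta_0\coloneqq\frac{2\omega_c-m_{\mathrm{opt}}}{\omega_q+\omega_c-m_{\mathrm{opt}}},
\end{equation*}
and $\eta_0\in[0,1)$. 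So $\omega_\eta>\omega_c$ for every $\eta\in(\eta_0,1]$.

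Now suppose $\eta^*>\eta_0$. Then there is an $\eta\in(\eta_0,\eta^*]$, and at that $\eta$ we have $\omega_\eta=\omega_c$ by the threshold proposition, contradicting $\omega_\eta>\omega_c$. Hence $\eta^*\le\eta_0$.

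The only delicate steps are the sign checks. The first is $m_{\mathrm{opt}}\le 2\omega_c$, which makes the denominator positive and the bound lie in $[0,1)$; it needs Fine's theorem applied to each one-sided marginal term. The second is the passage from $\overline{\mathcal Q}$ back to the supremum over $\mathcal Q$. Both are already established in the text.
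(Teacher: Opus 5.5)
Your proposal is correct and follows the paper's proof: both use the fallback $\mathcal S_d^{\mathrm{opt}}$ together with an optimal behavior $p^\star$ realizing $m_{\mathrm{opt}}$, factor $h(\eta)-\omega_c$ the same way, and combine $m_{\mathrm{opt}}\le 2\omega_c$ with $\omega_q>\omega_c$ to make the leading coefficient positive. Your explicit contradiction argument, which turns $\omega_\eta>\omega_c$ on $(\eta_0,1]$ into $\eta^*\le\eta_0$ via the threshold proposition, spells out a step the paper leaves implicit.
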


\begin{proof}
By the definition of $m_{\mathrm{opt}}$, there exists
$p^{*}\in\overline{\mathcal Q}$ such that
\begin{equation*}
    \omega(\mathcal B,p^{*})
    =
    \omega_q,
    \qquad
    m_{\mathcal S_d^{\mathrm{opt}}}(p^{*})
    =
    m_{\mathrm{opt}}.
\end{equation*}
Moreover, by the definition of
$\mathcal S_d^{\mathrm{opt}}$,
\begin{equation*}
    \omega(\mathcal B,\mathcal S_d^{\mathrm{opt}})
    =
    \omega_c.
\end{equation*}
Therefore,
\begin{equation*}
\begin{aligned}
\omega_q\!\left(
    \mathcal B_{\mathcal S_d^{\mathrm{opt}}}(\eta)
\right)
&\ge
\eta^2\omega_q
+
\eta(1-\eta)m_{\mathrm{opt}}
+
(1-\eta)^2\omega_c.
\end{aligned}
\end{equation*}
Subtracting $\omega_c$ from the right-hand side gives
\begin{equation*}
    \eta
    \left[
        \eta\bigl(
            \omega_q+\omega_c-m_{\mathrm{opt}}
        \bigr)
        -
        \bigl(
            2\omega_c-m_{\mathrm{opt}}
        \bigr)
    \right].
\end{equation*}
Using $m_{\mathrm{opt}}\le 2\omega_c$ and
$\omega_q>\omega_c$, we have
\begin{equation*}
    \omega_q+\omega_c-m_{\mathrm{opt}}
    \ge
    \omega_q-\omega_c
    >0.
\end{equation*}
Hence
\begin{equation*}
\omega_q\!\left(
    \mathcal B_{\mathcal S_d^{\mathrm{opt}}}(\eta)
\right)
>
\omega_c
\end{equation*}
whenever
\begin{equation*}
    \eta
    >
    \frac{
        2\omega_c-m_{\mathrm{opt}}
    }{
        \omega_q+\omega_c-m_{\mathrm{opt}}
    }.
\end{equation*}
Thus quantum advantage is achieved for every efficiency above the
right-hand side, which proves
\Cref{eq:threshold_efficiency_upper_bound}.
\end{proof}

%%%%%%%%%%%%%
\begin{prp}
\label[prp]{prp:threshold_efficiency_lower_bound}
Let $\mathcal B$ be a Bell expression for which
$\omega_q>\omega_c$. For each deterministic strategy
$\mathcal S_d$, suppose that the set
\begin{equation*}
    \left\{
        m\in\mathcal M_{\mathcal S_d}:
        m>m_{\mathcal S_d}^{q},
        \quad
        Q_{\mathcal S_d}(m)=\omega_c
    \right\}
\end{equation*}
has a minimum, and define
\begin{equation}
\label{eq:m_Sd_c}
    m_{\mathcal S_d}^{c}
    \coloneqq
    \min
    \left\{
        m\in\mathcal M_{\mathcal S_d}:
        m>m_{\mathcal S_d}^{q},
        \quad
        Q_{\mathcal S_d}(m)=\omega_c
    \right\}.
\end{equation}
Suppose further that
$Q'_{\mathcal S_d,-}(m_{\mathcal S_d}^{c})$
is finite for every $\mathcal S_d$. Then
\begin{equation}
\label{eq:threshold_efficiency_lower_bound}
    \min_{\mathcal S_d}
    \frac{1}{
        1-Q'_{\mathcal S_d,-}(m_{\mathcal S_d}^{c})
    }
    \le
    \eta^*.
\end{equation}
\end{prp}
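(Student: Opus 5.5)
Write $\eta_0$ for the left-hand side of \Cref{eq:threshold_efficiency_lower_bound}. The plan is to show that for every $\eta\in(0,\eta_0]$ and every deterministic fallback $\mathcal S_d$ we have $\omega_q(\mathcal B_{\mathcal S_d}(\eta))\le\omega_c$. Maximizing over $\mathcal S_d$ then gives $\omega_\eta\le\omega_c$. Combined with $\omega_\eta\ge\omega_0=\omega_c$ from monotonicity, this gives $\omega_\eta=\omega_c$, so $\eta_0\in E_c$ and hence $\eta^*=\max E_c\ge\eta_0$. The case $\eta=0$ is immediate.

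Fix $\mathcal S_d$ and abbreviate $m^c=m^c_{\mathcal S_d}$, $m^q=m^q_{\mathcal S_d}$ and $g=Q'_{\mathcal S_d,-}(m^c)$, which is finite by hypothesis.

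\textbf{Sign facts.} First, $g<0$. Since $m^q<m^c$, $Q_{\mathcal S_d}(m^q)=\omega_q$ and $Q_{\mathcal S_d}(m^c)=\omega_c<\omega_q$, concavity gives $g\le(\omega_c-\omega_q)/(m^c-m^q)<0$. Second, $m\le 2\omega_c$ for every $m\in\mathcal M_{\mathcal S_d}$. This follows from the Fine's-theorem argument used earlier (each single-loss term is a classical behavior), extended to $\overline{\mathcal Q}$ by continuity. In particular $m^c\le 2\omega_c$.

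\textbf{Supergradient step.} Next I would show that $g$ is a supergradient of the concave function $Q_{\mathcal S_d}$ at $m^c$, i.e.
\begin{equation*}
Q_{\mathcal S_d}(m)\le\omega_c+g\,(m-m^c)\qquad\text{for all } m\in\mathcal M_{\mathcal S_d}.
\end{equation*}
For $m<m^c$ this is the chord-slope monotonicity of concave functions applied to the left derivative. For $m>m^c$ it follows from $Q_{\mathcal S_d}(m)-Q_{\mathcal S_d}(m^c)\le Q'_{\mathcal S_d,+}(m^c)(m-m^c)\le g\,(m-m^c)$. This step covers the case where $m^c=m_+$ is an endpoint. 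I expect this to be the only delicate point: keeping track of one-sided derivatives at a possible endpoint, and using the finiteness assumption.

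\textbf{Bounding the objective.} Insert the supergradient bound into \Cref{eq:omega_q_eta_Q}, also using $\omega(\mathcal B,\mathcal S_d)\le\omega_c$. The objective is then at most
\begin{equation*}
\eta^2\bigl[\omega_c+g(m-m^c)\bigr]+\eta(1-\eta)m+(1-\eta)^2\omega_c .
\end{equation*}
This is affine in $m$ with slope $\eta\bigl(1-\eta(1-g)\bigr)$. The slope is nonnegative precisely when $\eta\le 1/(1-g)$, which holds since $\eta\le\eta_0$. Therefore the bound is largest at the maximal admissible value $m=2\omega_c$. There it equals
\begin{equation*}
\eta^2\omega_c+2\eta(1-\eta)\omega_c+(1-\eta)^2\omega_c+\eta^2 g\,(2\omega_c-m^c)=\omega_c+\eta^2 g\,(2\omega_c-m^c).
\end{equation*}
Since $g<0$ and $2\omega_c-m^c\ge0$, this is at most $\omega_c$. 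That is the required bound for each $\mathcal S_d$, completing the argument.
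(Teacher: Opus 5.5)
Your proposal is correct and follows the paper's proof. It uses the same supporting line $Q_{\mathcal S_d}(m)\le\omega_c+g\,(m-m^c)$ at $m^c$, the same sign facts $g<0$ and $m\le 2\omega_c$, and the same condition $\eta\le 1/(1-g)$. The differences are cosmetic. The paper splits into the cases $m<m^c$ and $m\ge m^c$, whereas you bound the objective by a single affine function of $m$ with nonnegative slope and evaluate it at $m=2\omega_c$. You also spell out the one-sided-derivative justification of the supporting line, which the paper only asserts.
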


\begin{proof}
Fix a deterministic strategy $\mathcal S_d$. For brevity, write
\begin{equation*}
    Q(m)
    \coloneqq
    Q_{\mathcal S_d}(m),
    \qquad
    m_q
    \coloneqq
    m_{\mathcal S_d}^{q},
    \qquad
    m_c
    \coloneqq
    m_{\mathcal S_d}^{c},
\end{equation*}
and denote
\begin{equation*}
    q_c
    \coloneqq
    Q'_-(m_c).
\end{equation*}
By definition,
\begin{equation*}
    m_c>m_q,
    \qquad
    Q(m_c)=\omega_c,
    \qquad
    Q(m_q)=\omega_q.
\end{equation*}
Since $Q$ is concave and $m_c>m_q$, we have
\begin{equation*}
\begin{aligned}
    q_c
    &\le
    \frac{
        Q(m_c)-Q(m_q)
    }{
        m_c-m_q
    }
    \\
    &=
    \frac{
        \omega_c-\omega_q
    }{
        m_c-m_q
    }
    <0.
\end{aligned}
\end{equation*}
In particular,
\begin{equation*}
    1-q_c>1.
\end{equation*}

By concavity, the left derivative $q_c$ defines a supporting line of
$Q$ at $m_c$, so that
\begin{equation}
\label{eq:Q_support_mc}
    Q(m)
    \le
    \omega_c
    +
    q_c(m-m_c)
\end{equation}
for every $m\in\mathcal M_{\mathcal S_d}$.

Suppose that
\begin{equation*}
    0<\eta
    \le
    \frac{1}{1-q_c}.
\end{equation*}
Equivalently,
\begin{equation}
\label{eq:slope_condition_fixed_Sd}
    q_c+\frac{1-\eta}{\eta}\ge0.
\end{equation}

We show that under this condition,
\begin{equation*}
    \omega_q\!\left(
        \mathcal B_{\mathcal S_d}(\eta)
    \right)
    \le
    \omega_c.
\end{equation*}

First, consider $m<m_c$. Using
\Cref{eq:Q_support_mc}, we obtain
\begin{align*}
&
\eta^2 Q(m)
+
\eta(1-\eta)m
\\
&\le
\eta^2\omega_c
+
\eta(1-\eta)m_c
+
\eta^2
\left(
    q_c+\frac{1-\eta}{\eta}
\right)
(m-m_c)
\\
&\le
\eta^2\omega_c
+
\eta(1-\eta)m_c,
\end{align*}
where the last inequality follows from
$m-m_c<0$ and
\Cref{eq:slope_condition_fixed_Sd}.
Therefore,
\begin{align*}
&
\eta^2 Q(m)
+
\eta(1-\eta)m
+
(1-\eta)^2
\omega(\mathcal B,\mathcal S_d)
\\
&\le
\eta^2\omega_c
+
\eta(1-\eta)m_c
+
(1-\eta)^2\omega_c
\\
&\le
\omega_c,
\end{align*}
where we used
$m_c\le2\omega_c$ and
$\omega(\mathcal B,\mathcal S_d)\le\omega_c$.

Next, consider $m\ge m_c$. Since $q_c<0$,
\Cref{eq:Q_support_mc} implies
\begin{equation*}
    Q(m)
    \le
    \omega_c.
\end{equation*}
Using also
$m\le2\omega_c$ and
$\omega(\mathcal B,\mathcal S_d)\le\omega_c$, we obtain
\begin{align*}
&
\eta^2 Q(m)
+
\eta(1-\eta)m
+
(1-\eta)^2
\omega(\mathcal B,\mathcal S_d)
\\
&\le
\eta^2\omega_c
+
2\eta(1-\eta)\omega_c
+
(1-\eta)^2\omega_c
\\
&=
\omega_c.
\end{align*}

Thus, for every
$m\in\mathcal M_{\mathcal S_d}$,
\begin{equation*}
    \eta^2 Q(m)
    +
    \eta(1-\eta)m
    +
    (1-\eta)^2
    \omega(\mathcal B,\mathcal S_d)
    \le
    \omega_c.
\end{equation*}
Taking the maximum over
$m\in\mathcal M_{\mathcal S_d}$ gives
\begin{equation*}
    \omega_q\!\left(
        \mathcal B_{\mathcal S_d}(\eta)
    \right)
    \le
    \omega_c
\end{equation*}
whenever
\begin{equation*}
    0<\eta
    \le
    \frac{1}{1-q_c}.
\end{equation*}
The same inequality trivially holds at $\eta=0$.

Since this argument applies to every deterministic strategy
$\mathcal S_d$, no fallback strategy can yield a lossy quantum value
larger than $\omega_c$ whenever
\begin{equation*}
    \eta
    \le
    \min_{\mathcal S_d}
    \frac{1}{
        1-Q'_{\mathcal S_d,-}(m_{\mathcal S_d}^{c})
    }.
\end{equation*}
Therefore,
\begin{equation*}
    \min_{\mathcal S_d}
    \frac{1}{
        1-Q'_{\mathcal S_d,-}(m_{\mathcal S_d}^{c})
    }
    \le
    \eta^*,
\end{equation*}
which proves the claim.
\end{proof}

We next show that the derivative appearing in
\Cref{prp:threshold_efficiency_lower_bound} can be bounded using a
linear program.

Fix a deterministic strategy $\mathcal S_d$, and use the shorthand
\begin{equation*}
    Q(m)\coloneqq Q_{\mathcal S_d}(m),
    \qquad
    m(p)\coloneqq m_{\mathcal S_d}(p),
    \qquad
    m_c\coloneqq m_{\mathcal S_d}^{c}.
\end{equation*}
Since $Q$ is concave and $Q(m_c)=\omega_c$, its left derivative can be
written as
\begin{equation}
\label{eq:left_derivative_secant}
    Q'_-(m_c)
    =
    -
    \sup_{m<m_c}
    \frac{
        Q(m)-\omega_c
    }{
        m_c-m
    }.
\end{equation}

Let $\mathcal{NS}$ denote the set of no-signaling behaviors, and define
\begin{equation*}
    Q^{\mathrm{NS}}(m)
    \coloneqq
    \max_{\substack{
        p\in\mathcal{NS}:\\
        m(p)=m
    }}
    \omega(\mathcal B,p).
\end{equation*}
Since $\overline{\mathcal Q}\subseteq\mathcal{NS}$,
\begin{equation*}
    Q(m)
    \le
    Q^{\mathrm{NS}}(m)
\end{equation*}
for every $m$. Therefore,
\begin{equation*}
    Q'_-(m_c)
    \ge
    -
    \sup_{m<m_c}
    \frac{
        Q^{\mathrm{NS}}(m)-\omega_c
    }{
        m_c-m
    }.
\end{equation*}
Equivalently,
\begin{equation}
\label{eq:NS_fractional_program}
    Q'_-(m_c)
    \ge
    -
    \sup_{\substack{
        p\in\mathcal{NS}:\\
        m(p)<m_c
    }}
    \frac{
        \omega(\mathcal B,p)-\omega_c
    }{
        m_c-m(p)
    }.
\end{equation}

Since both $\omega(\mathcal B,p)$ and $m(p)$ are linear in $p$, while
the no-signaling set is defined by linear constraints, the
linear-fractional optimization in
\Cref{eq:NS_fractional_program} can be converted into a linear program
using the Charnes--Cooper transformation~\cite{charnes1962programming}.
Introduce
\begin{equation*}
    t
    \coloneqq
    \frac{1}{
        m_c-m(p)
    },
    \qquad
    \widetilde p
    \coloneqq
    t p .
\end{equation*}
Then
\begin{equation*}
    m_c t-m(\widetilde p)=1,
\end{equation*}
and
\begin{equation*}
    \frac{
        \omega(\mathcal B,p)-\omega_c
    }{
        m_c-m(p)
    }
    =
    \omega(\mathcal B,\widetilde p)-\omega_c t.
\end{equation*}

Define $\ell_{\mathrm{NS}}(\mathcal S_d)$ as the optimal value of the
following linear program:
\begin{equation}
\label{eq:NS_derivative_LP}
\begin{aligned}
\ell_{\mathrm{NS}}(\mathcal S_d)
=
\max_{\widetilde p,t}\quad
&
    \omega(\mathcal B,\widetilde p)-\omega_c t
\\
\text{subject to}\quad
&
    m_c t-m(\widetilde p)=1,
\\
&
    \sum_{\mathbf a}
    \widetilde p(\mathbf a|\mathbf s)
    =
    t,
    \qquad
    \forall\,\mathbf s,
\\
&
    \widetilde p
    \text{ satisfies the no-signaling constraints},
\\
&
    \widetilde p(\mathbf a|\mathbf s)\geq0,
    \qquad
    \forall\,\mathbf a,\mathbf s,
\\
&
    t\geq0.
\end{aligned}
\end{equation}
The normalization constraints are scaled by $t$, while the
no-signaling constraints remain homogeneous under the transformation.
The optimal value therefore gives
\begin{equation}
\label{eq:NS_derivative_bound}
    Q'_{\mathcal S_d,-}(m_{\mathcal S_d}^{c})
    \geq
    -\ell_{\mathrm{NS}}(\mathcal S_d).
\end{equation}

Combining \Cref{prp:threshold_efficiency_lower_bound} with
\Cref{eq:NS_derivative_bound} gives the computable lower bound
\begin{equation}
\label{eq:NS_threshold_lower_bound}
    \eta^*
    \geq
    \min_{\mathcal S_d}
    \frac{1}{
        1+\ell_{\mathrm{NS}}(\mathcal S_d)
    }.
\end{equation}

As an example, consider the CHSH game. The inputs
$x,y\in\{0,1\}$ are uniformly distributed, and the players win if
their outputs $a,b\in\{0,1\}$ satisfy
\begin{equation*}
    a\oplus b=xy.
\end{equation*}
Its classical and quantum values are
\begin{equation*}
    \omega_c=\frac{3}{4},
    \qquad
    \omega_q=\frac{2+\sqrt{2}}{4}.
\end{equation*}

Choose the fallback strategy
\begin{equation*}
    f_A(x)=f_B(y)=0,
    \qquad
    x,y\in\{0,1\}.
\end{equation*}
For this strategy,
\begin{equation*}
    m_{\mathcal S_d}(p)
    =
    \frac{1}{2}
    +
    \frac{1}{2}p_A(0|0)
    +
    \frac{1}{2}p_B(0|0),
\end{equation*}
and hence
\begin{equation*}
    m_{\mathcal S_d}(p)
    \leq
    \frac{3}{2}.
\end{equation*}

At $m=3/2$, the no-signaling constraints imply that the CHSH value is
at most $\omega_c=3/4$, and this value is attained by the deterministic
behavior $a=b=0$. Thus,
\begin{equation*}
    Q_{\mathcal S_d}\left(\frac{3}{2}\right)
    =
    \omega_c.
\end{equation*}
Moreover, for an optimal CHSH quantum behavior,
\begin{equation*}
    m_{\mathcal S_d}^{q}=1,
    \qquad
    Q_{\mathcal S_d}(m_{\mathcal S_d}^{q})
    =
    \omega_q
    >
    \omega_c.
\end{equation*}
By the concavity of $Q_{\mathcal S_d}$, it follows that
\begin{equation*}
    m_{\mathcal S_d}^{c}
    =
    \frac{3}{2}.
\end{equation*}

For this fallback strategy, the linear program
\Cref{eq:NS_derivative_LP} becomes
\begin{equation*}
\begin{aligned}
\ell_{\mathrm{NS}}(\mathcal S_d)
=
\max_{\widetilde p,t}\quad
&
    \frac{1}{4}
    \sum_{\substack{
        a,b,x,y:\\
        a\oplus b=xy
    }}
    \widetilde p(a,b|x,y)
    -
    \frac{3}{4}t
\\
\text{subject to}\quad
&
    t
    -
    \frac{1}{2}\sum_b \widetilde p(0,b|0,0)
    -
    \frac{1}{2}\sum_a \widetilde p(a,0|0,0)
    =
    1,
\\
&
    \sum_{a,b}\widetilde p(a,b|x,y)=t,
    \qquad
    \forall\,x,y,
\\
&
    \widetilde p
    \text{ satisfies the no-signaling constraints},
\\
&
    \widetilde p(a,b|x,y)\geq0,
    \qquad
    \forall\,a,b,x,y,
\\
&
    t\geq0.
\end{aligned}
\end{equation*}

The no-signaling polytope in the CHSH scenario has $24$ vertices:
$16$ local deterministic behaviors and $8$ PR-box behaviors
~\cite{popescu1994quantum,brunner2014bell}.
Since the linear-fractional objective in
\Cref{eq:NS_fractional_program} attains its optimum at a vertex of the
no-signaling polytope, evaluating it on these $24$ vertices gives
\begin{equation*}
    \ell_{\mathrm{NS}}(\mathcal S_d)
    =
    \frac{1}{2},
\end{equation*}
and hence
\begin{equation*}
    Q'_{\mathcal S_d,-}
    \left(m_{\mathcal S_d}^{c}\right)
    \geq
    -\frac{1}{2}.
\end{equation*}
Therefore,
\begin{equation*}
    \eta^*
    \geq
    \frac{1}{
        1+\ell_{\mathrm{NS}}(\mathcal S_d)
    }
    =
    \frac{2}{3}.
\end{equation*}
Similarly, the no-signaling linear programs for the other deterministic
fallback strategies give the same lower bound,
$\eta^*\ge2/3$.

For the upper bound, \Cref{prp:threshold_efficiency_upper_bound} gives
\begin{equation*}
    \eta^*
    \leq
    \frac{
        2\omega_c-m_{\mathrm{opt}}
    }{
        \omega_q+\omega_c-m_{\mathrm{opt}}
    }
    =
    2\sqrt{2}-2.
\end{equation*}
The exact threshold efficiency for CHSH is known to be
$\eta^*=2/3$~\cite{PhysRevA.47.R747}, showing that the no-signaling
lower bound obtained above is tight.
%%%%%%%%%%%%%

\section{Bell Inequality Case Studies} 
\label{sec:case_studies}
In this section, we focus on two main questions concerning Bell nonlocality under loss. 
First, we investigate how the optimal fallback strategy $\mathcal S_d^*$ depends
on the efficiency $\eta$. This is interesting because while we can tune $\eta \in [0,1]$ continuously, $\mathcal S_d^*$ is a discrete object. 
For special cases, such as the CHSH game, this
dependence can be analyzed analytically. For other Bell inequalities,
however, a general criterion for identifying $\mathcal S_d^*$ is
not known. We therefore study representative examples of Bell inequalities numerically by computing  the lossy value with respect to different fallback strategies. Second, we also estimate the threshold efficiency $\eta^*$,
defined as the efficiency below which the lossy value does not
exceed the classical value.

\subsection{The CHSH game}
In the CHSH game, two parties, Alice and Bob, receive inputs
$x,y\in\{0,1\}$ uniformly at random and output bits $a,b\in\{0,1\}$. They win if
\begin{equation*}
    a\oplus b = xy.
\end{equation*}
Numerically, it is observed that optimal deterministic strategies outperform non-optimal deterministic strategies when used as the fallback strategy for the CHSH game~\cite{gigena2024robustselftestingbellinequalities}, as shown in \Cref{fig:chsh_compare}. We rigorously prove this result.
\begin{figure}
    \centering
    \includegraphics[width=0.5\linewidth]{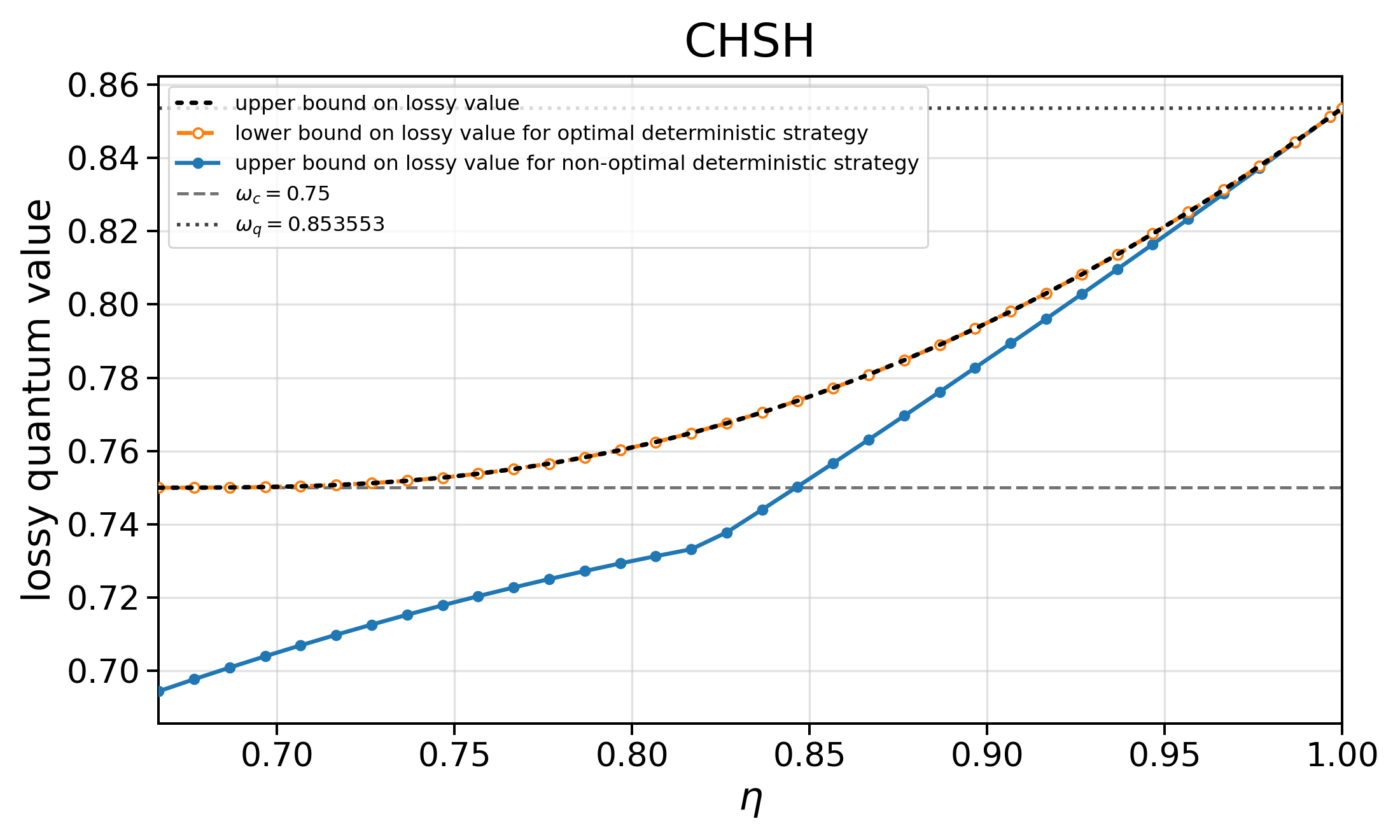}
    \caption{Comparison between optimal and non-optimal deterministic strategies as fallback strategies for the CHSH game under loss. For the non-optimal deterministic strategy, an upper bound on the $\eta$-lossy value is obtained using the NPA hierarchy, whereas for
the optimal deterministic strategy, a lower bound is obtained using
the see-saw method.}
    \label{fig:chsh_compare}
\end{figure}

We first rewrite the winning probability of a quantum strategy for the CHSH game using correlators. 
Let
\begin{equation*}
\begin{aligned}
    &\langle A_xB_y\rangle
    \coloneqq 
    p(00|xy)-p(01|xy)-p(10|xy)+p(11|xy),
    \\
    &\langle A_x\rangle
    \coloneqq
    p_A(0|x)-p_A(1|x),
    \qquad
    \langle B_y\rangle
    \coloneqq 
    p_B(0|y)-p_B(1|y),
\end{aligned}
\end{equation*}
where
\begin{equation*}
    p_A(a|x)
    =
    \sum_b p(a,b|x,y),
    \qquad
    p_B(b|y)
    =
    \sum_a p(a,b|x,y).
\end{equation*}
The marginals are well defined as we consider no-signaling
behaviors.
With this convention, the CHSH winning probability can be
written using correlators:
\begin{equation*}
    p_{\mathrm{win}}
    =
    \frac12+
    \frac18
    \sum_{x,y\in\{0,1\}}
    (-1)^{xy}
    \langle A_xB_y\rangle .
\end{equation*}
Thus, maximizing the CHSH winning probability is equivalent to maximizing the
corresponding CHSH expression
\begin{equation*}
    I_{\mathrm{CHSH}}
    =
    \sum_{x,y\in\{0,1\}}
    (-1)^{xy}
    \langle A_xB_y\rangle .
\end{equation*}

Let $\mathcal{S}_d$ be a deterministic strategy and let
\[
    f_A(x)=\alpha_x,f_B(y)=\beta_y, \quad \alpha_x,\beta_y\in\{\pm1\}
\]
denote Alice's and Bob's functions. We now introduce loss where the efficiency is $\eta$. Letting $\mathcal S_d$ be the fallback strategy, by~\Cref{eq:lossy_behavior} the correlator $\langle A_x B_y\rangle$ is effectively replaced by 
\begin{equation*}
\begin{aligned}
    \langle A_xB_y\rangle_{\mathcal{S}_d}(\eta)
    \coloneqq
    &\ \eta^2\langle A_xB_y\rangle       +
    \eta(1-\eta)\beta_y\langle A_x\rangle  +
    \eta(1-\eta)\alpha_x\langle B_y\rangle +
    (1-\eta)^2\alpha_x\beta_y .
\end{aligned}
\end{equation*}
Therefore, the Bell expression for the CHSH game is transformed into
\begin{equation*}
    I_{\mathcal{S}_d}(\eta)
    =
    \sum_{x,y\in\{0,1\}}
    (-1)^{xy}
    \langle A_xB_y\rangle_{\mathcal{S}_d}(\eta).
\end{equation*}
and the corresponding winning probability is
\begin{equation*}
    %p_{\mathrm{win}}^{\mathcal{S}_d}(\eta)
    \frac12+\frac18 I_{\mathcal{S}_d}(\eta).
\end{equation*}

Expanding $
    I_{\mathcal{S}_d}(\eta)
$ gives
\begin{equation*}
\begin{aligned}
    I_{\mathcal{S}_d}(\eta)
    =
    &\ \eta^2
    \sum_{x,y\in\{0,1\}}
    (-1)^{xy}\langle A_xB_y\rangle  +
    \eta(1-\eta)
    \Big[
        (\beta_0+\beta_1)\langle A_0\rangle
        +
        (\beta_0-\beta_1)\langle A_1\rangle
    \Big] \\
    &+
    \eta(1-\eta)
    \Big[
        (\alpha_0+\alpha_1)\langle B_0\rangle
        +
        (\alpha_0-\alpha_1)\langle B_1\rangle
    \Big] +
    (1-\eta)^2 \omega(I_\mathrm{CHSH}, \mathcal S_d).
\end{aligned}
\end{equation*}

The $16$ deterministic strategies for the CHSH game split into two classes according to
whether
\[
    \omega(I_{\mathrm{CHSH}},\mathcal S_d)=2
    \qquad\text{or}\qquad
    \omega(I_{\mathrm{CHSH}},\mathcal S_d)=-2.
\]
Each class contains eight strategies. Representatives can be chosen as
\[
    \mathcal S_d^{(+)}:
    \quad
    \langle A_0 \rangle =\langle A_1 \rangle =\langle B_0 \rangle =\langle B_1 \rangle =1,
\]
and
\[
    \mathcal S_d^{(-)}:
    \quad
    \langle A_0 \rangle =\langle A_1 \rangle =1,\qquad \langle B_0 \rangle =\langle B_1 \rangle =-1,
\]
respectively.

The strategies within each class are related by symmetries of the CHSH
expression generated by local input and output relabelings. For example,
the input and output relabelings
\[
    \langle A_0 \rangle \leftrightarrow \langle A_1 \rangle ,\quad \langle B_1 \rangle \mapsto-\langle B_1 \rangle ,
    \qquad\text{and}\qquad
    \langle B_0 \rangle \leftrightarrow \langle B_1 \rangle ,\quad \langle A_1 \rangle \mapsto-\langle A_1 \rangle 
\]
leave
\[
    I_{\mathrm{CHSH}}
    =
    \langle A_0  B_0 \rangle +\langle A_0  B_1 \rangle +\langle A_1 B_0 \rangle -\langle A_1  B_1 \rangle 
\]
unchanged. Together with the simultaneous output flip
$\langle A_x \rangle \mapsto- \langle A_x\rangle $, $\langle B_y\rangle \mapsto-\langle B_y\rangle$, these symmetries connect all eight
strategies within each class. 
It is clear that such relabelings do not change the corresponding $\eta$-lossy value,
Hence it suffices to consider
one representative from each class. We choose $\mathcal S_d^{(+)}$ and $\mathcal S_d^{(-)}$, thereby obtaining the two inequivalent lossy
Bell expressions
\begin{equation*}
    I^{(+)}(\eta)
    =
    \eta^2  \sum_{x,y\in\{0,1\}}
    (-1)^{xy}\langle A_xB_y\rangle
    +
    2(1-\eta)\eta\langle A_0\rangle
    +
    2(1-\eta)\eta\langle B_0\rangle 
    +
    2(1-\eta)^2
\end{equation*}
and
\begin{equation*}
    I^{(-)}(\eta)
    =
    \eta^2  \sum_{x,y\in\{0,1\}}
    (-1)^{xy}\langle A_xB_y\rangle
    +
    2(1-\eta)\eta\langle A_0\rangle
    -
    2(1-\eta)\eta\langle B_0\rangle 
    -
    2(1-\eta)^2.
\end{equation*}
Here we use the shorthand
$$I^{(\pm)}(\eta) \coloneqq I_{\mathcal S_d^{(\pm)}}(\eta).$$

The quantum values of these two Bell expressions have been investigated in Ref.~\cite{gigena2024robustselftestingbellinequalities}: 
\begin{lem}[Theorem 1 in~\cite{gigena2024robustselftestingbellinequalities}]
\label[lem]{lem:gigena}
    Suppose $\eta >0$. Let $\kappa \coloneqq 2\frac{1-\eta}{\eta}$. Define the quartic polynomial
    \begin{equation}
        \label{eq:chsh_polynomial}
            \begin{aligned}
            F_\kappa(\lambda)
            \coloneqq
            &128-64\kappa^2+84\kappa^4-20\kappa^6
            +128\lambda+80\kappa^2\lambda
            +4\kappa^4\lambda-8\kappa^6\lambda \\
            &+16\lambda^2+48\kappa^2\lambda^2
            -11\kappa^4\lambda^2
            -16\lambda^3+4\kappa^2\lambda^3
            -4\lambda^4.
        \end{aligned}
    \end{equation} 
    Then, the quantum value of $I^{(+)}(\eta)$ is 
    \begin{equation*}
        %\label{eq:chsh_lossy_value_opt}
        \omega_q^+(\eta) =
        \begin{cases}
            \max \{\eta^2\lambda^*_{\eta} + 2(1-\eta)^2,2\} & F_\kappa(\lambda) \text{ has a real root}\\
            2 & \text{otherwise},
        \end{cases}
    \end{equation*}
    where $\lambda^*_{\eta}$ is the largest real root of $F_\kappa(\lambda)$.
    And similarly, the quantum value of $I^{(-)}(\eta)$ is 
    \begin{equation*}
        %\label{eq:chsh_lossy_value_non_opt}
        \omega_q^-(\eta) =
        \begin{cases}
            \max \{-\eta^2\lambda'_{\eta} - 2(1-\eta)^2,-4\eta^2+8\eta-2\} & F_\kappa(\lambda)\text{ has a real root}\\
            -4 \eta^2+8\eta-2 & \text{otherwise},
        \end{cases}
    \end{equation*}
    where $\lambda'_{\eta}$ is the smallest real root of $F_\kappa(\lambda)$. 
\end{lem}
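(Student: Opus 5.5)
Following Gigena et al., we reduce the two-party optimization to qubits, parameterize it, and reduce stationarity to the quartic $F_\kappa$.

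\textbf{Step 1: reduce to qubits.} We start from $I^{(+)}(\eta)$ (the case $I^{(-)}(\eta)$ is parallel). Dividing by $\eta^2$, it suffices to maximize
\[
J_\kappa \coloneqq \langle A_0B_0\rangle+\langle A_0B_1\rangle+\langle A_1B_0\rangle-\langle A_1B_1\rangle+\kappa\langle A_0\rangle+\kappa\langle B_0\rangle
\]
over quantum behaviors, and then add back $2(1-\eta)^2$. Since each party has two binary observables, Jordan's lemma lets us decompose both parties' Hilbert spaces into blocks of dimension at most two on which all four projective measurements act. The value is then a convex combination of block values, so the supremum is attained on two-qubit states with real projective measurements, plus one-dimensional blocks.

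\textbf{Step 2: dichotomy between deterministic and genuinely quantum optima.} The one-dimensional blocks give deterministic behaviors, whose best value is the deterministic one. For $\mathcal S_d^{(+)}$ this yields $2+2\kappa$, i.e.\ value $2$ after rescaling. This is the source of the ``$\max\{\cdot,2\}$'' in the formula and of the ``otherwise $2$'' branch. For $I^{(-)}$, the best deterministic value is instead $-4\eta^2+8\eta-2$, attained for instance by outputting $+1$ always, which scores $\eta^2\cdot 2+2\eta(1-\eta)(1-1)\ldots$; I would simply check all $16$ local points. For genuine qubit blocks, parameterize the observables as $A_x=\cos\alpha_x Z+\sin\alpha_x X$ and $B_y=\cos\beta_y Z+\sin\beta_y X$. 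The optimal state is then the top eigenvector of the $4\times4$ real symmetric Bell operator $W(\alpha,\beta)=\sum(-1)^{xy}A_x\otimes B_y+\kappa(A_0\otimes I+I\otimes B_0)$.

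\textbf{Step 3: eliminate the angles to obtain $F_\kappa$.} We set $\lambda=\lambda_{\max}(W)$ and impose stationarity with respect to the four angles. By the Hellmann–Feynman theorem, this is $\langle\psi|\partial W|\psi\rangle=0$. Using the symmetry $A\leftrightarrow B$, we may take $\alpha_x=\beta_x$, which leaves two angles. Next, we eliminate the angles from the characteristic polynomial $\det(W-\lambda)=0$ together with the two stationarity equations, using a resultant or Gröbner basis computation. The expected output is the quartic $F_\kappa(\lambda)$ in \eqref{eq:chsh_polynomial}, and the global maximum over qubit strategies is its largest real root $\lambda^*_\eta$. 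For $I^{(-)}$, the Bell expression is $-(\text{CHSH with }\kappa\text{ replaced by }-\kappa\text{ on }B_0)$; after a relabeling this becomes minimizing the same family, which gives the smallest real root $\lambda'_\eta$.

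\textbf{Main obstacle.} The hardest step is the elimination and the justification that the symmetric ansatz $\alpha=\beta$ and the largest stationary root are truly the global maximum. Boundary cases, where the qubit optimum degenerates into a deterministic one, must also be handled, and these are exactly what the $\max$ with the deterministic value covers. I would first try to prove symmetrization by concavity arguments, then verify the resultant symbolically.
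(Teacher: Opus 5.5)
The paper does not prove this lemma. It imports it verbatim from Theorem~1 of~\cite{gigena2024robustselftestingbellinequalities}, so your outline has to stand on its own. Steps~1--2 are sound: Jordan blocks, real qubit observables, and mixed $2\times1$ blocks giving local behaviours bounded by $2+2\kappa$. Step~3, however, has a genuine gap exactly where the content of the lemma lies.

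The reduction ``by the $A\leftrightarrow B$ symmetry we may take $\alpha_x=\beta_x$'' is unjustified. Invariance of $W$ under the party swap only makes the \emph{set} of maximizers swap-invariant. Averaging an optimal strategy with its swapped copy gives a symmetric behaviour only as a mixture (a direct sum of two qubit strategies), not as a single qubit strategy on the diagonal. No concavity argument is available either, because $\theta\mapsto\lambda_{\max}(W(\theta))$ is a non-constant periodic function of the angles. Restricting to the diagonal can only lower the supremum, so it gives at best the lower bound. The upper bound is lost, and it is the direction the paper actually needs for $\omega_q^-$. The symmetrization is also unnecessary: conjugating by a local rotation about $Y$ shows that the spectrum of $W$ depends only on $\alpha_1-\alpha_0$ and $\beta_1-\beta_0$. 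So the unsymmetrized problem already has just two essential angles and can be eliminated directly.

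Second, even after elimination, ``the global maximum is the largest real root of $F_\kappa$'' does not follow, because the eliminant is not $F_\kappa$. At $\alpha_x=\beta_y=0$ one has $W=2Z\otimes Z+\kappa(Z\otimes I+I\otimes Z)$. Its top eigenvalue $2+2\kappa$ is simple for $\kappa>0$ with eigenvector $|00\rangle$, and all Hellmann--Feynman derivatives vanish there since $\langle 0|X|0\rangle=0$. So $2+2\kappa$ is a critical value, yet $F_\kappa(2+2\kappa)=-16(\kappa-1)^3(\kappa+2)^4\neq0$ for $\kappa\neq1$. Hence the eliminant carries further factors, coming from axis-aligned configurations, lower eigenvalue branches and non-real angle solutions. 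To reach the stated formula you must show two things:
(i) no real root of those other factors exceeds $\max\{\lambda^*_\eta,2+2\kappa\}$;
(ii) $\lambda^*_\eta$ is attained by \emph{real} angles whenever it exceeds $2+2\kappa$, since a real root of an eliminant can come from complex solutions.
By the factorization just given, (ii) is needed on the whole range $\eta>2/3$. These extra factors, not the one-dimensional Jordan blocks (your qubit family already contains the deterministic point), are the actual source of the $\max\{\cdot,2\}$ and of the ``no real root'' branch.

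Both points must be redone for $I^{(-)}$. Flipping Bob's outputs and then all outputs shows that $\max\bigl(\mathrm{CHSH}+\kappa\langle A_0\rangle-\kappa\langle B_0\rangle\bigr)=-\min_\theta\lambda_{\min}(W(\theta))$. So one needs the smallest critical value, with the same two issues. Finally, a small slip: constant output $+1$ gives $4\eta-2$ for $I^{(-)}$, not $-4\eta^2+8\eta-2$. The latter is attained, e.g., by $a_0=b_1=+1$, $a_1=b_0=-1$.
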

Since $F_\kappa(\lambda)$ is a quartic polynomial, both $\lambda^*_{\eta}$ and $\lambda'_{\eta}$ can in principle be solved for analytically. However, their expressions are too complex for comparing $\omega_q^+(\eta)$ and $\omega_q^-(\eta)$ by inspection. In Ref.~\cite{gigena2024robustselftestingbellinequalities}, Gigena et al.\ verified that $\omega_q^+(\eta) \ge \omega_q^-(\eta)$ holds for all $\eta$ numerically. Here we give a rigorous proof of this result.

\begin{thm}\label[thm]{thm:lossy_chsh}
    For all $\eta \in [0,1]$,
    \begin{equation*}
        \omega_q^+(\eta) \ge \omega_q^-(\eta).
    \end{equation*}
    That is, an optimal deterministic strategy for the CHSH game is also an optimal fallback strategy at all $\eta \in [0,1]$.
\end{thm}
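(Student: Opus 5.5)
The plan is to avoid closed-form roots of the quartic and compare the two branches of \Cref{lem:gigena} directly through polynomial inequalities. First I dispose of the trivial cases. At $\eta=0$ both expressions reduce to the deterministic values, and $\omega_q^+(0)=2\ge -2=\omega_q^-(0)$. For $\eta>0$, \Cref{lem:gigena} gives $\omega_q^+(\eta)\ge 2$ in every case. The second candidate value of $\omega_q^-$ satisfies
\[
-4\eta^2+8\eta-2 = 2-4(1-\eta)^2 \le 2 ,
\]
so it never exceeds $\omega_q^+(\eta)$. If $F_\kappa$ has no real root, this already finishes the proof. It remains to handle the case where $F_\kappa$ has a real root and $\omega_q^-(\eta)=-\eta^2\lambda'_\eta-2(1-\eta)^2$.

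In that case it suffices to show
\[
-\eta^2\lambda'_\eta-2(1-\eta)^2 \le \eta^2\lambda^*_\eta+2(1-\eta)^2 .
\]
Dividing by $\eta^2$ and using $2(1-\eta)^2/\eta^2=\kappa^2/2$, this is equivalent to
\[
\lambda^*_\eta \ \ge\ -\lambda'_\eta-\kappa^2 .
\]
I will introduce the reflected polynomial $H_\kappa(\lambda)\coloneqq F_\kappa(-\lambda-\kappa^2)$. Its real roots are exactly $-r-\kappa^2$ for the real roots $r$ of $F_\kappa$, so its largest real root is $\mu\coloneqq-\lambda'_\eta-\kappa^2$. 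The target becomes: the largest real root of $F_\kappa$ is at least the largest real root of $H_\kappa$.

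Since the leading coefficient of $F_\kappa$ is $-4<0$, we have $F_\kappa(\lambda)\to-\infty$ as $\lambda\to\infty$. Hence it suffices to show $F_\kappa(\mu)\ge 0$: then by the intermediate value theorem $F_\kappa$ has a root in $[\mu,\infty)$, and so $\lambda^*_\eta\ge\mu$. Because $H_\kappa(\mu)=0$, this reduces to showing
\[
D_\kappa(\lambda)\coloneqq F_\kappa(\lambda)-H_\kappa(\lambda)\ge 0 \quad\text{at }\lambda=\mu .
\]
The $\lambda^4$ terms cancel in $D_\kappa$, so $D_\kappa$ is a cubic in $\lambda$ with coefficients polynomial in $\kappa\ge 0$. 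I will compute it explicitly. I expect it to factor with a factor of $(2\lambda+\kappa^2)$, because $F_\kappa=H_\kappa$ on the symmetry axis $\lambda=-\kappa^2/2$.

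Next I will locate $\mu$ in a region where the remaining factor has a fixed sign. For this I will use a priori bounds on $\mu$ coming from the optimization problem itself. One is the Tsirelson-type bound $\eta^2\mu+\kappa^2\eta^2/2\cdot(\ldots)\le$ the trivial bound $2\sqrt2\,\eta^2+4\eta(1-\eta)-2(1-\eta)^2$ for $I^{(-)}$. The other is the fact that $\omega_q^-$ is attained on this branch only when it exceeds $2-4(1-\eta)^2$. If a clean factorization does not appear, my fallback is to eliminate $\lambda$. I would compute the resultant of $H_\kappa$ and $D_\kappa$ in $\lambda$, which is a univariate polynomial in $\kappa$. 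Checking that it has no roots for $\kappa>0$ in the relevant range, for instance via Sturm sequences, shows the sign of $D_\kappa(\mu)$ cannot change as $\kappa$ varies. The sign can then be checked at a single value such as $\kappa\to0$ ($\eta=1$), where $\lambda^*=-\lambda'$ by symmetry of the CHSH value.

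The main obstacle is this sign analysis of $D_\kappa$ at the moving root $\mu$. It mixes the branch condition, namely which root is selected and when the $\max$ switches, with a two-variable polynomial inequality. I expect most of the work to go into bracketing $\mu$ tightly enough that the sign of $D_\kappa(\mu)$ is forced.
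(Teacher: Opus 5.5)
\textbf{Verdict: genuine gap.} Your reduction is correct, but the decisive step, the sign of $D_\kappa(\mu)$, is left open. The tool you propose for it cannot work, and your fallback fails at its anchor point.

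\textbf{What is right.} Your guess about $D_\kappa$ is correct. With $s\coloneqq\lambda+\kappa^2/2$,
\[
D_\kappa(\lambda)=8s\bigl[(3\kappa^2-4)s^2+32+16\kappa^2-14\kappa^4+2\kappa^6\bigr].
\]
The root branch is active exactly when $\lambda'_\eta<-2-2\kappa$, which gives $s>0$ at $\lambda=\mu$.

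\textbf{The missing step.} For $\kappa^2<4/3$, $D_\kappa(\mu)\ge0$ is then equivalent to the explicit upper bound
\[
-\lambda'_\eta\le \frac{\kappa^2}{2}+\sqrt{\frac{32+16\kappa^2-14\kappa^4+2\kappa^6}{4-3\kappa^2}}=2\sqrt2+\Bigl(\frac12+\frac{5\sqrt2}{4}\Bigr)\kappa^2+O(\kappa^4).
\]
Perturbing the simple root $-2\sqrt2$ of $P_0=(\lambda+2)^2(\lambda^2-8)$ gives $-\lambda'_\eta=2\sqrt2+\frac{\sqrt2-1}{2}\kappa^2+O(\kappa^4)$. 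So the slack is only of order $\kappa^2$, and any a priori bracket must be sharp to second order at $\eta=1$.
\begin{itemize}
\item Your Tsirelson-type bound only yields $-\lambda'_\eta\le 2\sqrt2+2\kappa$, which is already wrong at first order. At $\kappa=0.2$ it gives $3.23$ where $2.92$ is needed.
\item Your other a priori fact (the branch condition) is a lower bound on $-\lambda'_\eta$. It only produces $s>0$.
\end{itemize}
So your primary route cannot close the argument.

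\textbf{The fallback.} Your fallback is in substance the paper's Case 2 argument. The paper rules out the crossing $\lambda_1+\lambda_4=0$ by factoring $P_\kappa=(\lambda^2-t^2)(\lambda^2+(4-\kappa^2)\lambda+m)$, which reduces to $\kappa^2(38\kappa^6+113\kappa^4-40\kappa^2+16)=0$. It then uses continuity of the roots and a check at $\kappa=1/4$.

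As you state it, the fallback fails at the anchor. At $\kappa=0$ one has $\mu=\lambda^*_\eta=2\sqrt2$, so $D_0(\mu)=0$. The resultant of $H_\kappa$ and $D_\kappa$ also vanishes there, since $H_0$ and $D_0$ share the roots $\pm2\sqrt2$. Checking ``at $\kappa\to0$'' therefore gives no sign information. You need an interior anchor or an order-$\kappa^2$ expansion.

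You also still need the groundwork the paper supplies:
\begin{itemize}
\item The $\kappa$-range where the root branch is active. The identity $F_\kappa(-2-2\kappa)=16\kappa^4(\kappa^3-3\kappa^2-8\kappa+4)$, with monotonicity of $F_\kappa$ below $-2-2\kappa$, shows that within $(0,1)$ this is exactly $\kappa<(5-\sqrt{17})/2$.
\item Continuity of $\lambda'_\eta$ on that range. The discriminant gives four simple real roots there.
\item A way to dispose of $\eta\le 2/3$. There $4-3\kappa^2$ can become negative and your sign analysis of $D_\kappa$ breaks down. The paper simply cites the known threshold $\eta^*=2/3$.
\end{itemize}

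Once these are added, your weaker target $\lambda_1+\lambda_4\ge-\kappa^2$ (instead of the paper's $\ge 0$) is valid. It has about twice the slack, since $\lambda_1+\lambda_4=\kappa^2+O(\kappa^4)$. It does not, however, remove the need for the no-crossing and continuity argument.
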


\begin{proof}

The theorem statement is trivially true for $\eta=0$.
Moreover, for the CHSH game $\omega_\eta >\omega_c$ only for $\eta>2/3$~\cite{PhysRevA.47.R747,PhysRevA.83.032123}. Therefore, if $\eta<2/3$, we have $\omega_q^+(\eta)  = 2$ (always use the optimal deterministic strategy), and $\omega_q^-(\eta) \leq \omega_c = 2$, and so $\omega_q^+(\eta) \ge \omega_q^-(\eta)$.
Furthermore, if $\eta =1$, there is no loss so the theorem is trivially true.
Hence, if we can prove
$$ \omega_q^+(\eta) \geq \omega_q^-(\eta),$$ 
for $\eta \in (2/3,1)$, or equivalently, $\kappa \coloneqq 2 \frac{1-\eta}{\eta} \in (0,1)$, the theorem follows.

Let $F_\kappa(\lambda)$ be given by~\Cref{eq:chsh_polynomial}. Let
\[
    H(\kappa)
    \coloneqq
    F_\kappa(-2-2\kappa)
    =
    16\kappa^4
    \bigl(\kappa^3-3\kappa^2-8\kappa+4\bigr).
\]
Solving $H(\kappa)=0$ gives
\begin{equation*}
    \kappa=\frac{5\pm\sqrt{17}}{2}, \kappa =0, \kappa=-2.
\end{equation*}

\paragraph{Case 1}
Consider first the case
\[
    \kappa\in
    \left[
        \frac{5-\sqrt{17}}{2},1
        %= 0.438...,1
        % \frac{2}{\sqrt{13}} = 0.554...
    \right).
\]
Let
\[
    \lambda=-2-2\kappa-x,
    \qquad x\ge0.
\]
A direct calculation gives
\begin{equation*}
\begin{aligned}
\partial_\lambda
F_\kappa(-2-2\kappa-x)
=
2\Big[
&8x^3
+(6\kappa^2+48\kappa+24)x^2\\
&+(11\kappa^4+24\kappa^3+72\kappa^2+96\kappa-16)x\\
&-4\kappa^6+22\kappa^5+48\kappa^4
+16\kappa^3+64\kappa^2-32\kappa
\Big].
\end{aligned}
\end{equation*}
The coefficients of $x^3$, $x^2$, $x$, as well as the constant term are positive for $    \kappa\in
    \left[
        \frac{5-\sqrt{17}}{2},
        1
    \right),
$
 thus
we have 
\begin{equation*}
\partial_\lambda
F_\kappa(-2-2\kappa-x) 
>0
\end{equation*}
when $x \ge 0$, or equivalently, $\lambda \le -2-2\kappa$.
Moreover, as 
\begin{equation*}
    H(\kappa)
    \coloneqq
    F_\kappa(-2-2\kappa) \le 0 
\end{equation*}
when $    \kappa\in
    \left[
        \frac{5-\sqrt{17}}{2},
        1
    \right)
$,
$F_\kappa(\lambda)$ has no root on $(-\infty, -2-2\kappa)$.
That is, 
\[
    \lambda'_{\eta}\ge -2-2\kappa.
\]
Hence, by~\Cref{lem:gigena},
$$\omega_q^-(\eta) \leq \eta^2(2+2\kappa )- 2(1-\eta)^2 = -4\eta^2+8\eta-2 \le 2 \leq  \omega_q^+(\eta).$$

\paragraph{Case 2} Now consider the case
\[
    \kappa\in\left(0, \frac{5-\sqrt{17}}{2}\right).
\]
We compute the discriminant of $F_\kappa(\lambda)$~\cite{chatgpt54thinking},
\begin{equation*}
\label{eq:chsh_discriminant}
\mathrm{Disc}_{\lambda}(F_\kappa)
=
-512\,\kappa^2(\kappa-2)^5(\kappa+2)^5(13\kappa^2-4)^3
\bigl(4\kappa^6-11\kappa^4-8\kappa^2-16\bigr).
\end{equation*}
We have $\mathrm{Disc}_{\lambda}(F_\kappa)=0$ only when
$\kappa=0$ or $\kappa=2/\sqrt{13} $ for $\kappa\in[0,1)$.
Note that $\frac{5-\sqrt{17}}{2} 
        = 0.438...
        < \frac{2}{\sqrt{13}} = 0.554...$
Therefore, for $\kappa\in\left(0, \frac{5-\sqrt{17}}{2}\right)$, the terms $\kappa^2$ and $(\kappa+2)^5$ are positive, while $(\kappa-2)^5$, $\bigl(4\kappa^6-11\kappa^4-8\kappa^2-16\bigr)$, and $(13\kappa^2-4)^3$ are negative. 
Hence $\mathrm{Disc}_{\lambda}(F_\kappa)>0$ in this case.
Moreover, for $\kappa \in (0, \frac{5-\sqrt{17}}{2})$,
\[
    F_\kappa(0)
    =
    128-64\kappa^2+84\kappa^4-20\kappa^6>0,
\]
and $F_\kappa(\lambda)\to-\infty$ as $\lambda\to\pm\infty$. Therefore
$F_\kappa(\lambda)=0$ has four real and distinct roots in this case. We denote
these roots by
\[
    \lambda_1(\kappa)>\lambda_2(\kappa)>
    \lambda_3(\kappa)>\lambda_4(\kappa).
\]

We observe that when $
    \kappa \in (0,\frac{5-\sqrt{17}}{2} ),
$
we have $F_\kappa(-2-2\kappa)>0$. Since
$F_\kappa(\lambda)\to-\infty$ as $\lambda\to-\infty$, the intermediate value
theorem guarantees the existence of a real root in
$(-\infty,-2-2\kappa)$. Therefore,
\[
    \lambda_{\eta}'=\lambda_4(\kappa)<-2-2\kappa.
\]
Hence, by~\Cref{lem:gigena}, $\omega_q^-(\eta) = -\eta^2\lambda_4(\kappa) - 2(1-\eta)^2 \le -\eta^2\lambda_4(\kappa)$. Furthermore, $\omega_q^+(\kappa)\geq \eta^2\lambda_1(\kappa) + 2(1-\eta)^2\geq \eta^2\lambda_1(\kappa)$. 
Therefore, to prove
\[
    \omega_q^+(\eta)\ge \omega_q^-(\eta),
\]
it is sufficient to show
\begin{equation*}
%\label{eq:lambda_extreme_sum}
    \lambda_1(\kappa)+\lambda_4(\kappa)\ge 0
\end{equation*}

Define the monic polynomial
\[
    P_\kappa(\lambda)\coloneqq -\frac14 F_\kappa(\lambda).
\]
Then
\begin{equation*}
\begin{aligned}
    P_\kappa(\lambda)
    =
    &\lambda^4+(4-\kappa^2)\lambda^3
    -\left(4+12\kappa^2-\frac{11}{4}\kappa^4\right)\lambda^2 \\
    &-(32+20\kappa^2+\kappa^4-2\kappa^6)\lambda
    -(32-16\kappa^2+21\kappa^4-5\kappa^6).
\end{aligned}
\end{equation*}
First consider $\kappa=0$. We have
\[
    P_0(\lambda)
    =
    \lambda^4+4\lambda^3-4\lambda^2-32\lambda-32
    =
    (\lambda+2)^2(\lambda^2-8).
\]
Hence the smallest and largest roots are
\[
    \lambda_4(0)=-2\sqrt2,
    \qquad
    \lambda_1(0)=2\sqrt2,
\]
so
\[
    \lambda_1(0)+\lambda_4(0)=0.
\]

Now suppose that, for some
\[
    \kappa\in
    \left(0,\frac{5-\sqrt{17}}{2}\right),
\]
one has
\[
    \lambda_1(\kappa)+\lambda_4(\kappa)=0.
\]
Then the extreme roots are $\pm t$ for some $t\ge 0$. Therefore
$P_\kappa$ must factor as
\[
    P_\kappa(\lambda)
    =
    (\lambda^2-t^2)
    \bigl(\lambda^2+(4-\kappa^2)\lambda+m\bigr)
\]
for some real number $m$, because the coefficient of $\lambda^3$ in
$P_\kappa$ is $4-\kappa^2$. Expanding,
\[
    P_\kappa(\lambda)
    =
    \lambda^4+(4-\kappa^2)\lambda^3
    +(m-t^2)\lambda^2
    -(4-\kappa^2)t^2\lambda
    -mt^2 .
\]
Comparing coefficients with the explicit formula for $P_\kappa$, we obtain
\[
    -(4-\kappa^2)t^2
    =
    -(32+20\kappa^2+\kappa^4-2\kappa^6).
\]
Since $4-\kappa^2>0$ for
\[
    \kappa\in
    \left(
        0,\frac{5-\sqrt{17}}{2}
    \right),
\]
we have
\[
    t^2
    =
    \frac{32+20\kappa^2+\kappa^4-2\kappa^6}{4-\kappa^2}
    =
    2\kappa^4+7\kappa^2+8.
\]
Also,
\[
    m-t^2
    =
    -\left(4+12\kappa^2-\frac{11}{4}\kappa^4\right),
\]
and hence
\[
    m
    =
    t^2-\left(4+12\kappa^2-\frac{11}{4}\kappa^4\right)
    =
    \frac{19}{4}\kappa^4-5\kappa^2+4.
\]
Finally, from the constant term,
\[
    -mt^2
    =
    -(32-16\kappa^2+21\kappa^4-5\kappa^6).
\]
Substituting the above expressions for $m$ and $t^2$ yields
\[
    \left(\frac{19}{4}\kappa^4-5\kappa^2+4\right)
    (2\kappa^4+7\kappa^2+8)
    =
    32-16\kappa^2+21\kappa^4-5\kappa^6.
\]
After simplification,
\[
    \kappa^2
    \bigl(38\kappa^6+113\kappa^4-40\kappa^2+16\bigr)=0.
\]
Let $u\coloneqq \kappa^2\ge0$, and define
\[
    R(u)\coloneqq 38u^3+113u^2-40u+16.
\]
It remains to show that $R(u)$ has no positive root. We have
\[
    R'(u)
    =
    114u^2+226u-40
    =
    2(57u^2+113u-20).
\]
Thus the critical points are
\[
    u_{\pm}
    =
    \frac{-113\pm\sqrt{17329}}{114}.
\]
The larger critical point $u_+$ is positive, while $u_-$ is negative. Hence
$R(u)$ has a local maximum at $u_-$ and a local minimum at $u_+$. Moreover,
as
\[
    R(u_+)>0,
\]
$R(u)$ has no positive root. This implies
\[
    \lambda_1(\kappa)+\lambda_4(\kappa)\neq 0
\]
for
\[
    \kappa\in
    \left(0,\frac{5-\sqrt{17}}{2}\right).
\]

Since the coefficients of $P_\kappa(\lambda)$ depend continuously on $\kappa$,
all four roots $\{\lambda_i(\kappa)\}_{i=1}^4$ are real continuous functions of
$\kappa$ on
\[
    \left(0,\frac{5-\sqrt{17}}{2}\right),
\]
according to Theorem~5.1 in Ref.~\cite{kato1966perturbation}. Therefore, the
largest and smallest roots are continuous functions of $\kappa$, and  $\lambda_1(\kappa)+\lambda_4(\kappa)$ varies continuously on this
interval. We can simply set
\[
    \kappa=\frac14\in
    \left(0,\frac{5-\sqrt{17}}{2}\right),
\]
and one finds that $P_{1/4}(\lambda)=0$ has real roots
\[
    \lambda_1\approx 2.904,
    \qquad
    \lambda_4\approx -2.802.
\]
Their sum is positive. Thus, by the continuity of
$\lambda_1(\kappa)$ and $\lambda_4(\kappa)$, 
we obtain
\[
    \lambda_1(\kappa)+\lambda_4(\kappa) > 0
\]
for
\[
    \kappa\in
    \left(
        0,\frac{5-\sqrt{17}}{2}
    \right).
\] 

Consequently, we have proved that $\omega_q^+(\eta) \ge \omega_q^-(\eta)$ for all $\eta \in [0,1]$, and the theorem follows.

\end{proof}

\subsection{Numerical studies}
The mathematical argument developed for the CHSH inequality in~\Cref{thm:lossy_chsh} is rather ad hoc and does not extend directly to general Bell inequalities. We therefore study several representative Bell inequalities under loss with numerical methods. For each deterministic strategy, we use the NPA hierarchy~\cite{navascues2008convergent} to obtain an upper bound on the corresponding $\eta$-lossy value and employ see-saw optimization~\cite{werner2001bell,Ketjl} to obtain lower bounds.
All numerical results in this section are obtained using the NPA hierarchy
at level $3$. For the see-saw optimization, we set the local Hilbert-space
dimension to be equal to the number of outputs per party unless otherwise specified. At each value of $\eta$, we run the see-saw optimization with random initializations repeatedly 
and keep the largest value found. 

These numerical bounds also allow us to estimate the threshold efficiency
$\eta^*$. Given the numerical precision of our see-saw and NPA calculations,
we regard a value exceeding the classical bound by more than $10^{-8}$ as
evidence of quantum advantage. Differences of magnitude at
most $10^{-8}$ are treated as numerically inconclusive. We take the smallest sampled efficiency at
which the see-saw lower bound shows a quantum advantage as an upper bound on
$\eta^*$, and the largest sampled efficiency at which the NPA upper bound
shows no quantum advantage as a lower bound. Combining these
two criteria yields an interval containing $\eta^*$.

We consider several well-known Bell inequalities, including the $I_{3322}$ inequality~\cite{collins2004relevant}, CGLMP inequality~\cite{collins2002bell}, and the chained inequality~\cite{braunstein1990wringing}.
For each inequality, we introduce loss and study how the optimal fallback strategy $\mathcal S_d^*$   
and the lossy value $\omega_\eta$ depend on the detection efficiency $\eta$.

Before discussing individual examples in detail, we summarize a pattern observed
in these numerical studies. We find the inequalities that we study can be split into two classes, distinguished by how the optimal fallback strategies vary with $\eta$.
In some cases, there exists a single deterministic strategy $\mathcal S_d$ which is an optimal fallback strategy at all $\eta \in [0,1]$. 
We call this a \textbf{universal fallback strategy}. In particular, since $\mathcal S_d$ is an optimal fallback strategy at $\eta=0$, it is also an optimal deterministic strategy.
As we proved in~\Cref{thm:lossy_chsh}, the CHSH game has a universal fallback strategy. 
Similar to the CHSH case, throughout this section, deterministic strategies related by input and
output relabelings that leave the Bell expression invariant are regarded as equivalent, as they yield the same
lossy value. Accordingly, when the optimal strategy identified
numerically varies with $\eta$ only within the same equivalence class,
we refer to any representative of this class as a universal fallback
strategy.

In our studies, we find numerical evidence that the CGLMP inequality with
three outputs per party and the chained inequality with three inputs per
party each has a universal fallback strategy. However, not all inequalities admit a universal fallback strategy. The $I_{3322}$ inequality is such an example, as summarized in Table~\ref{tab:classification}.

\begin{table}[t]
\centering
\caption{Classification of nonlocal games according to whether they have a universal fallback strategy. The game $I_{\mathrm{corr}22}$ is randomly generated and defined in~\Cref{subsubsec:random}. }
\label{tab:classification}
\begin{tabular}{cc}
\toprule
Class & Examples \\
\midrule
Has universal fallback &
CHSH, CGLMP (3 outputs), Chained (3 inputs)
\\
No universal fallback  &
$I_{3322}, I_{\mathrm{corr}22}$
\\
\bottomrule
\end{tabular}
\end{table}

\subsubsection{The CGLMP inequality} 
As a first example, we consider the CGLMP inequality with three outputs per party~\cite{collins2002bell}. This inequality was studied from the perspective of the detection loophole in~\cite{Massar2002}, but in their analysis they treated a loss event as an additional output possibility. Here we assume the parties must return an output from the original alphabet.

Let $a_x, b_y \in \{0,1,2\}$ respectively be Alice's and
Bob's outputs for input $x,y \in \{1,2\}$. 
Define the probability $ P( a_x = b_y +k )$ as 
    \begin{equation*}
        P(a_x=b_y+k) \equiv \sum_{j=0}^{2} P(a_x=j,b_y=j+k \mod 3).
    \end{equation*}
    The three-output CGLMP inequality is 
    \begin{equation*}
    \begin{aligned}
                I_3^{\mathrm{CGLMP}} = &
                \bigg( \big[ P(a_1=b_1) + P(b_1=a_2+1) + P(a_2=b_2) + P(b_2=a_1)\big] \\
                &- \big[ P(a_1=b_1-1) + P(b_1=a_2) + P(a_2=b_2-1) + P(b_2=a_1-1)\big] \bigg) \le 2.
    \end{aligned}
    \end{equation*}
    
Numerical results indicate that the threshold efficiency lies within
$\eta^* \in [0.67,0.70]$. For visualization we scan $\eta \in [0.6,1]$ in
increments of $0.01$, enumerate all $3^4=81$ deterministic strategies,
and evaluate the corresponding $\eta$-lossy values using the NPA
hierarchy and see-saw optimization.
The resulting upper and lower
bounds are shown in \Cref{fig:cglmp_lossy_value}.
\begin{figure}[h!]
    \centering
    \includegraphics[width=0.5\linewidth]{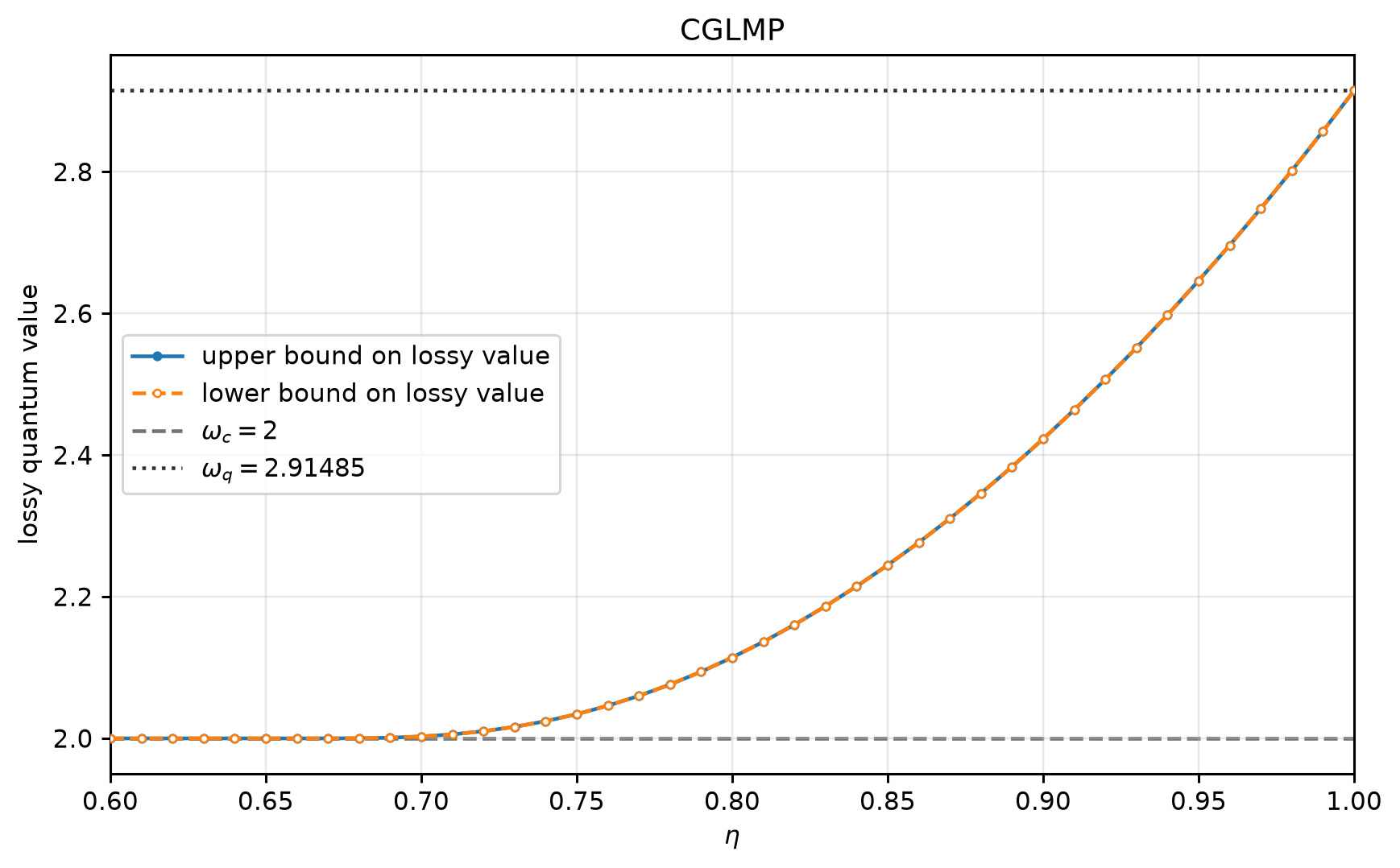}
    \caption{$\eta$-lossy value of the three-output CGLMP inequality.} 
    \label{fig:cglmp_lossy_value}
\end{figure}

We find that the fallback strategy
\begin{equation*}
    \mathcal{S}_d^{\mathrm{opt}} =
    \begin{cases}
        f_1(1)=2,\quad f_1(2)=0\\
        f_2(1)=1,\quad f_2(2)=2
    \end{cases}
\end{equation*}
is optimal at all sampled values of $\eta$. 
Moreover, $\mathcal{S}_d^{\mathrm{opt}}$ is an optimal deterministic
strategy for the CGLMP inequality. It therefore remains an
optimal fallback strategy in the regime $\eta\in[0,0.6]$, below the threshold efficiency where the lossy value equals the classical value.

To achieve a better lower bound, we used the see-saw method with local dimensions $d_A=d_B=3$. Starting from $\eta=1$, we evaluated the sampled values of $\eta$ in descending order toward $\eta=0$. At $\eta=1$, the see-saw was initialized using the known optimal CGLMP state and measurements; at each subsequent point, the optimized solution from the neighboring higher-$\eta$ point was used as a warm start. 
A worst-case gap of order $10^{-6}$ remains between the see-saw lower and NPA upper bounds on the lossy value. 
Nevertheless,
this gap does not affect the identification of the optimal fallback
strategy: the see-saw lower bound on the $\eta$-lossy value with respect to  
$\mathcal{S}_d^{\mathrm{opt}}$ is strictly larger than the NPA upper
bounds with respect to all other inequivalent fallback
strategies. 

Together with the results for $\eta\in[0,0.6]$, our numerical results
thus suggest that the three-output CGLMP inequality has a universal
optimal fallback strategy.

\subsubsection{The chained inequality}
As a second example, we consider the chained Bell inequality~\cite{braunstein1990wringing} with three
inputs per party.
\begin{equation*}
        \langle I_3^{\mathrm{chained}} \rangle = \sum_{i=1}^3 ( \langle A_iB_i\rangle  + \langle A_{i+1}B_{i} \rangle) \le 4,
\end{equation*}
where the last term is $-\langle A_1 B_3\rangle$. 
Numerical results indicate that the threshold efficiency lies within
$\eta^* \in [0.80,0.81]$. For visualization we scan
$\eta\in[0.65,1]$ in increments of $0.01$, enumerate all
$2^6=64$ deterministic strategies, and evaluate the corresponding
$\eta$-lossy values using the NPA hierarchy and see-saw optimization.
The results are shown in \Cref{fig:chained_lossy_value}.

We find that the fallback strategy
\begin{equation*}
    \mathcal{S}_d^{\mathrm{opt}}
    =
    \begin{cases}
        A_1=1, \quad A_2=1, \quad A_3=1\\
        B_1=1, \quad B_2=1, \quad B_3=1
    \end{cases}.
\end{equation*}
is optimal at all sampled values of $\eta$.
Moreover, $\mathcal{S}_d^{\mathrm{opt}}$ is an optimal deterministic
strategy for the original chained Bell inequality. It therefore remains
an optimal fallback strategy in the regime $\eta\in[0,2/3]$, where no
quantum advantage is possible. Taken together, our numerical results
suggest that $I_3^{\mathrm{chained}}$ has a universal optimal
fallback strategy.
\begin{figure}[h!]
    \centering
    \includegraphics[width=0.5\linewidth]{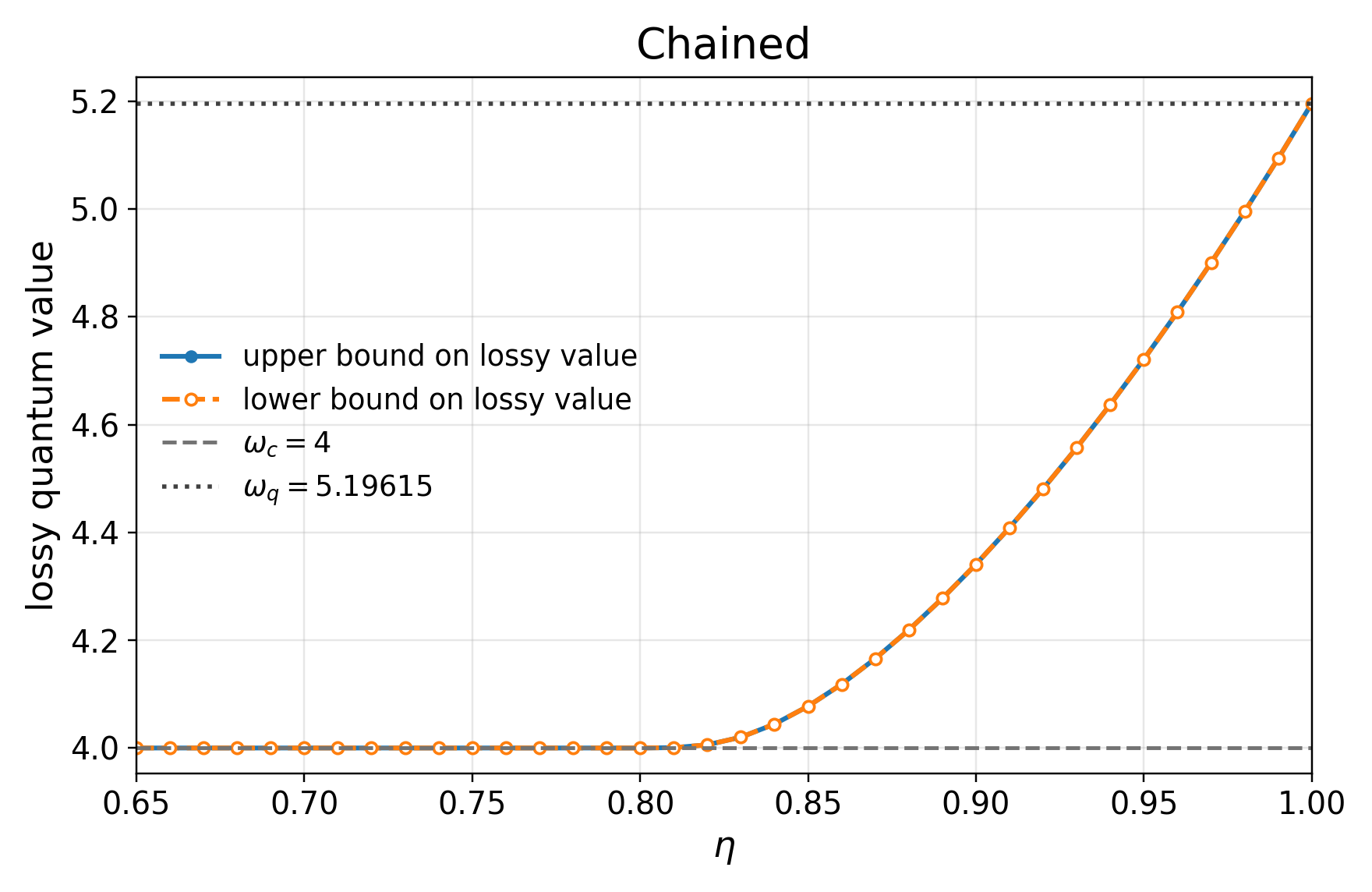}
    \caption{$\eta$-lossy value of the three-input chained inequality. }
    \label{fig:chained_lossy_value}
\end{figure}

\subsubsection{The $I_{3322}$ inequality}
As a third example, we consider the $I_{3322}$ inequality~\cite{collins2004relevant}, which has three inputs and two outputs per party:
\begin{equation*}
\begin{aligned}
I_{3322}
=&\ p(00|00)+p(00|01)+p(00|02)
+p(00|10)+p(00|11)-p(00|12) \\
&+p(00|20)-p(00|21)
-2p_A(0|0)-p_A(0|1)-p_B(0|0)
\le 0.
\end{aligned}
\end{equation*}
Here, $p_A$ and $p_B$ are the marginal distributions for Alice and Bob, respectively.

For $I_{3322}$, 
numerical results indicate that the threshold efficiency lies within
$\eta^* \in [0.66,0.67]$. We therefore scan $\eta\in[0.65,1]$ in increments
of $0.01$, enumerate all $2^6=64$ deterministic strategies, and evaluate
the corresponding $\eta$-lossy values using the NPA hierarchy and
see-saw optimization, as shown in \Cref{fig:i3322_lossy_value}. 
\begin{figure}
    \centering
    \includegraphics[width=0.5\linewidth]{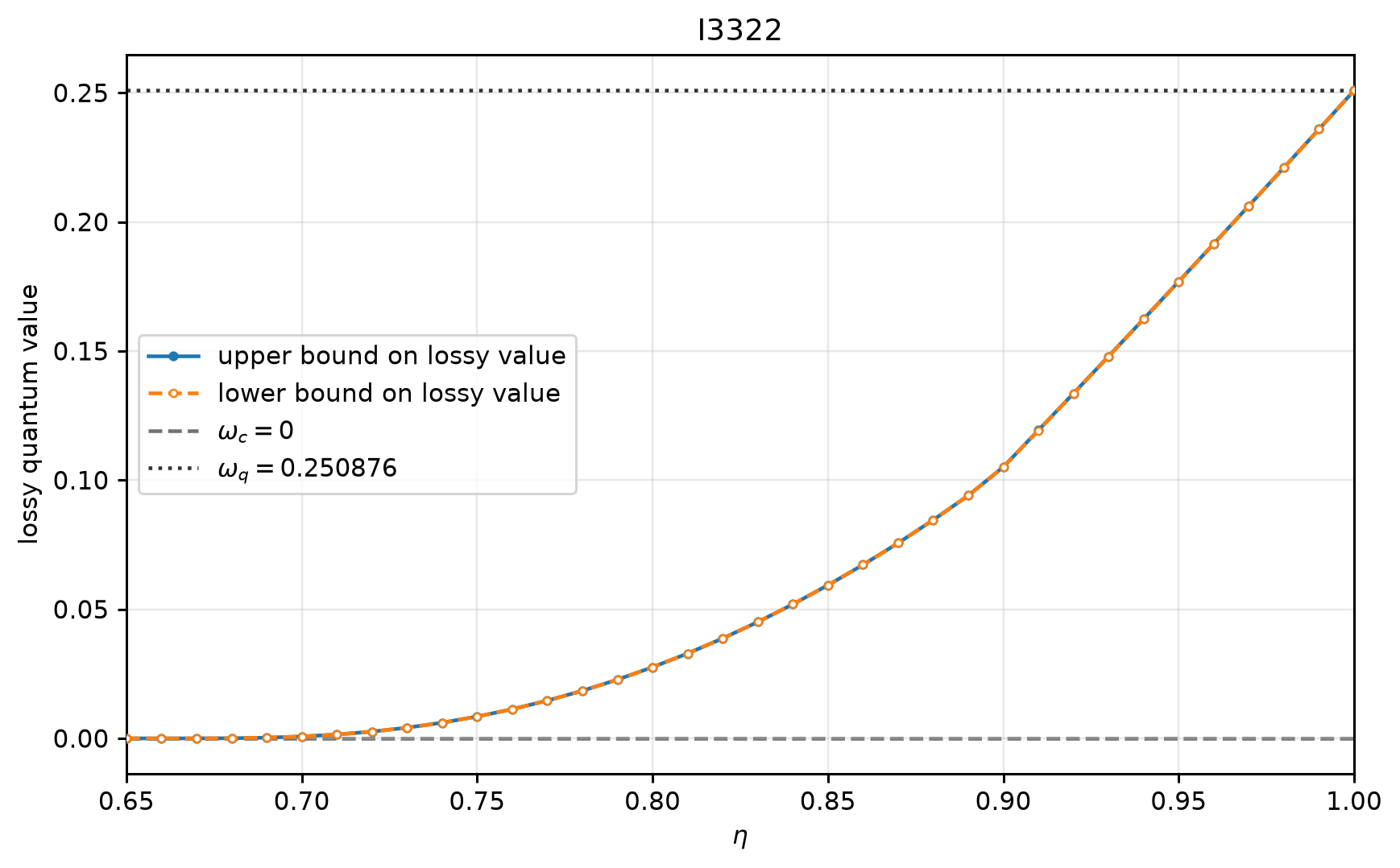}
    \caption{The $\eta$-lossy value of the $I_{3322}$ inequality. }
    \label{fig:i3322_lossy_value}
\end{figure}
The maximal quantum violation of $I_{3322}$ is known to depend sensitively on the
Hilbert-space dimension. We employed a structured projective see-saw
with local dimension $d=51$, supplemented by higher-dimensional
refinements at some points, for example, at $\eta=0.90$ ($d=79$) and $\eta=1$ ($d=149$).
For each candidate fallback strategy, we initialized the projective see-saw using the Pal--V\'ertesi construction~\cite{pal2010maximal}. When available, the optimized state and measurements from the neighboring higher-$\eta$ point were used as an additional warm start. The largest feasible value over all initializations was retained as
the see-saw lower bound. The worst-case gap between this lower bound
and the corresponding NPA upper bound is on the order of $10^{-5}$.

Our numerical results establish that $I_{3322}$ does not have a universal fallback strategy.
Consider the
deterministic strategy
\begin{equation*}
\mathcal S_d:
\quad
\begin{cases}
f_A(0)=1,\quad f_A(1)=1,\quad f_A(2)=0,\\
f_B(0)=0,\quad f_B(1)=0,\quad f_B(2)=0,
\end{cases}
\end{equation*}
which is a non-optimal deterministic strategy for the $I_{3322}$ inequality, as it
gives $I_{3322}=-1<0$.

Among the $2^6=64$ deterministic strategies, $20$ are optimal for the
original $I_{3322}$ inequality, saturating the classical bound
$I_{3322}=0$. For all sampled values of $\eta\in[0.67,0.99]$, the
largest $\eta$-lossy value among these $20$ strategies is attained by the fallback strategy
\begin{equation*}
\mathcal S_d^{\mathrm{opt}}:
\quad
\begin{cases}
f_A(0)=0,\quad f_A(1)=1,\quad f_A(2)=0,\\
f_B(0)=0,\quad f_B(1)=0,\quad f_B(2)=0.
\end{cases}
\end{equation*}

As shown in \Cref{fig:i3322_compare}, for $\eta\ge 0.91$, the
see-saw lower bound obtained with $\mathcal S_d$ is
larger than the NPA upper bound obtained by maximizing over all optimal deterministic strategies. Therefore, in this regime,
$\mathcal S_d$ outperforms all optimal deterministic
strategies for the $I_{3322}$ inequality.
Since $\mathcal S_d$ is itself a non-optimal deterministic strategy, our numerical results show that $I_{3322}$ does not have 
a universal optimal fallback strategy.

\begin{figure}
    \centering
    \includegraphics[width=0.5\linewidth]{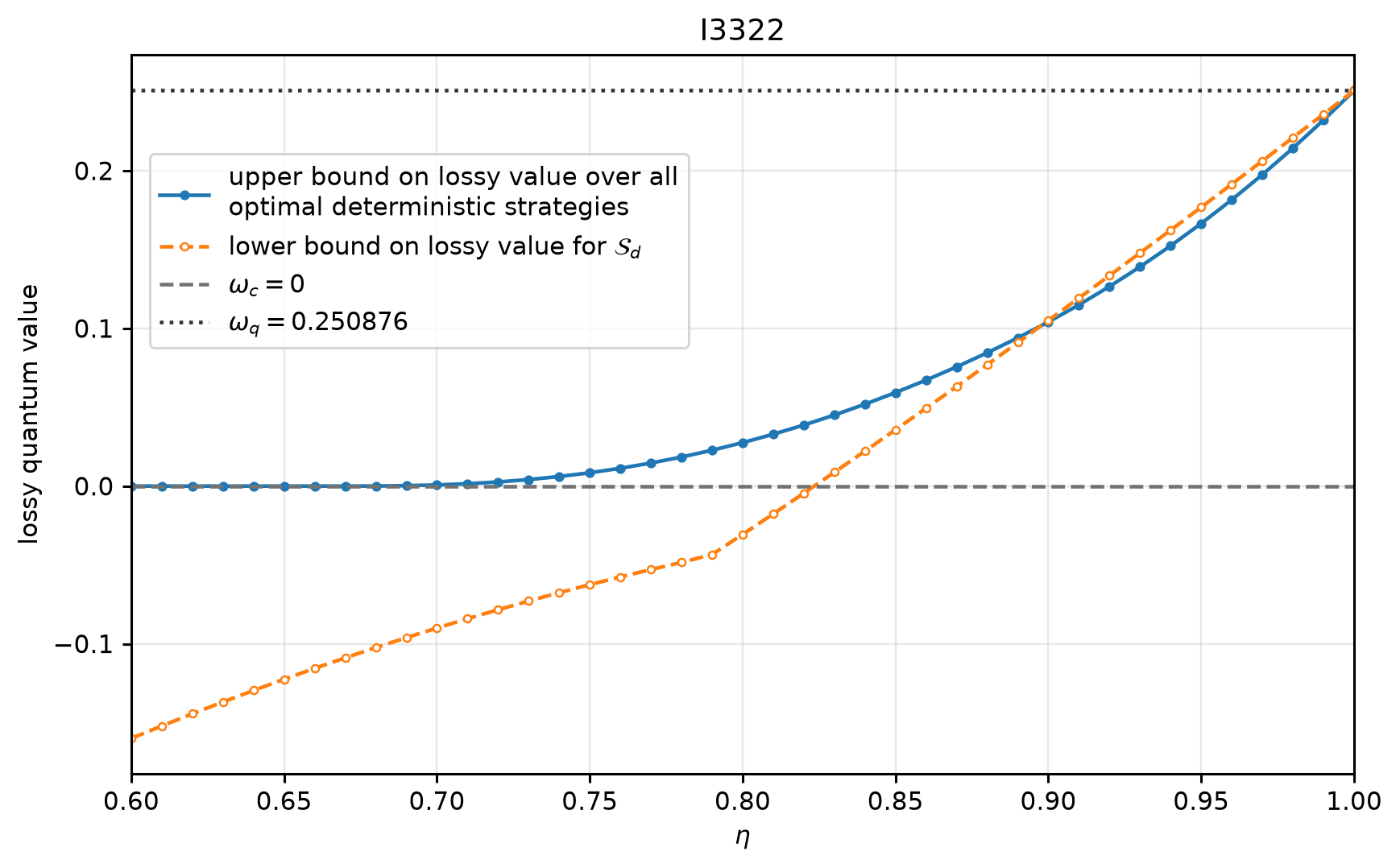}
    \caption{Comparison of fallback strategies for the
    \(I_{3322}\) inequality. The blue solid curve represents the NPA upper bound maximized over all optimal deterministic strategies. For
    \(\mathcal S_d\), which is a non-optimal deterministic strategy, we plot with the orange dashed curve the see-saw lower bound. For the sampled points on $\eta \in [0.92,0.99]$, we find the non-optimal deterministic strategy $\mathcal S_d$ outperforms all optimal deterministic strategies. }  
    \label{fig:i3322_compare}
\end{figure}

\subsubsection{Random game $I_{\mathrm{corr}22}$}
\label{subsubsec:random}
We wish to find other examples of nonlocal games that have no universal fallback strategy. The above examples were selected from well-studied Bell inequalities with a lot of structure. It may be more promising to try less structured examples.

To do this, we randomly generate bipartite Bell expressions with
two inputs and two outputs per party. For each generated expression, we enumerate
all \(2^4=16\) deterministic strategies and compute the corresponding
$\eta$-lossy values.

Consider the example 
\begin{equation*}
    I_{\mathrm{corr22}} = \frac{1}{2}\langle A_0\rangle+ \frac{1}{2}\langle A_1\rangle- \frac{1}{2}\langle B_0\rangle + \langle A_0B_0\rangle + \langle A_0B_1\rangle + \langle A_1B_0\rangle- \langle A_1B_1\rangle.
\end{equation*}
The classical value is $5/2$. The $\eta$-lossy value of \(I_{\mathrm{corr22}}\), computed using the NPA
hierarchy and see-saw optimization for $\eta \in [0.84,1]$ in increments of $0.01$, is shown in
\Cref{fig:corr22_lossy_value}. The threshold efficiency is $\eta^* \in [0.87,0.88]$.
\begin{figure}
    \centering
    \includegraphics[width=0.5\linewidth]{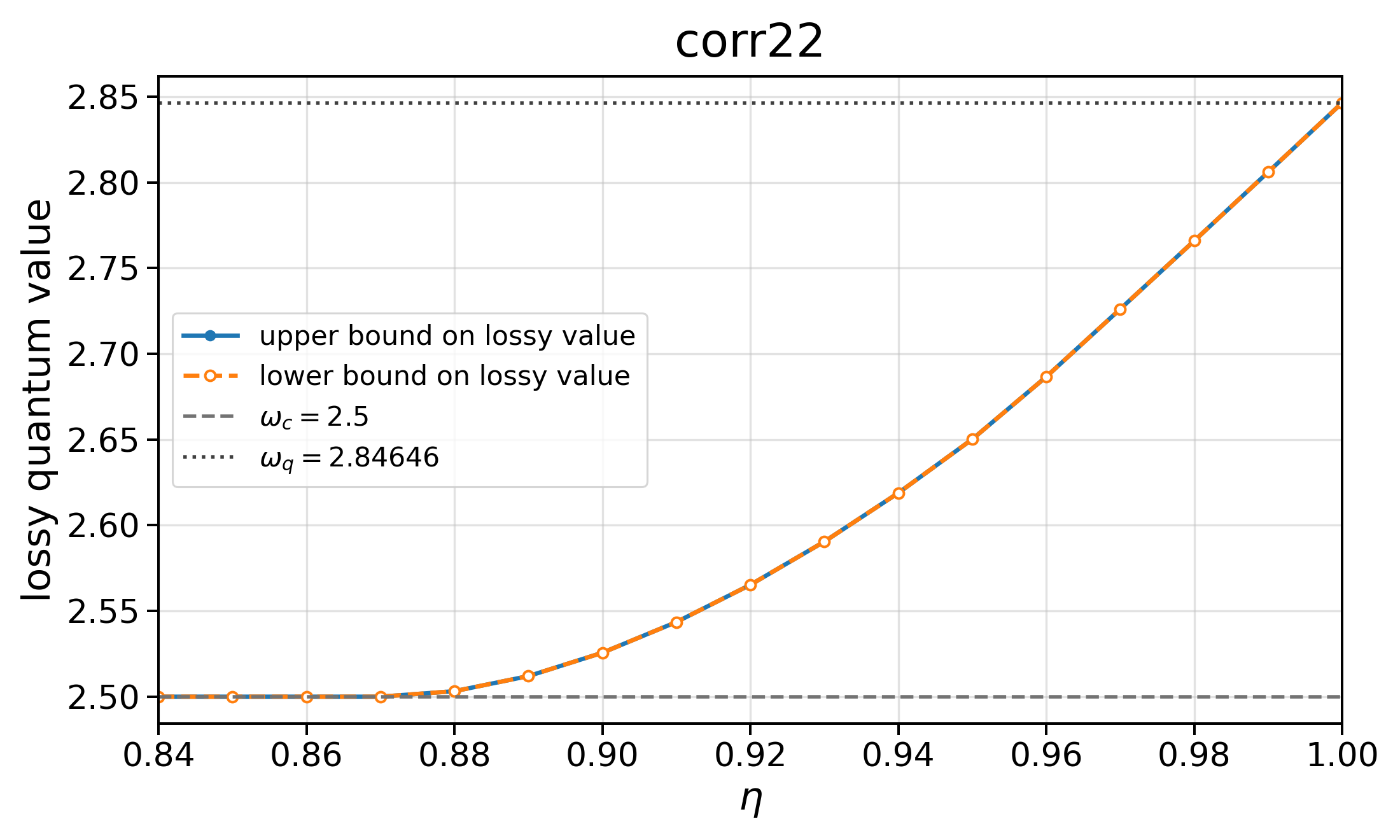}
    \caption{$\eta$-lossy value of \(I_{\mathrm{corr22}}\). }
    \label{fig:corr22_lossy_value}
\end{figure}
\begin{figure}
    \centering
    \begin{subfigure}[t]{0.45\linewidth}
        \centering
        \includegraphics[width=\linewidth]{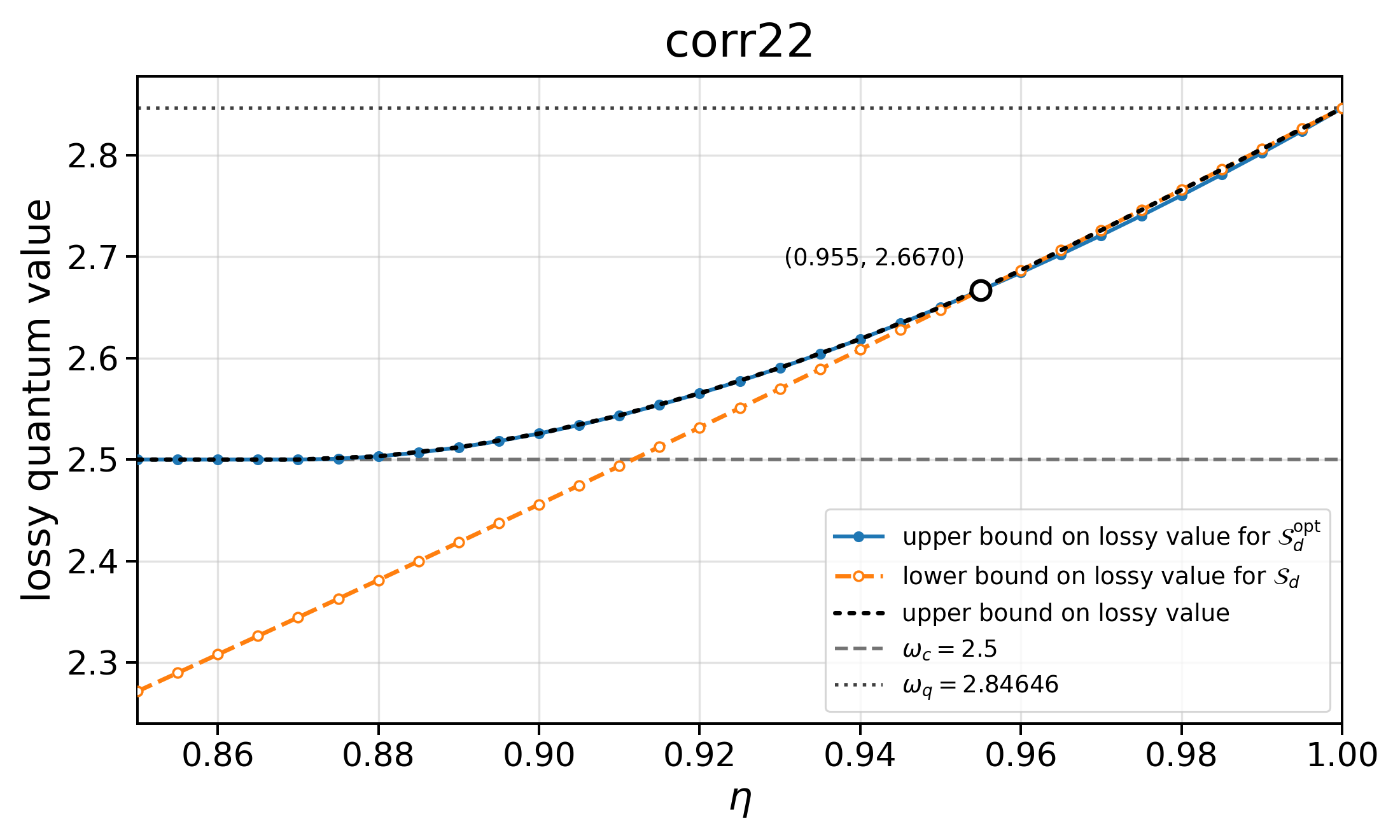}
        \caption{}
        \label{fig:corr22_compare_main}
    \end{subfigure}
    \hfill
    \begin{subfigure}[t]{0.5\linewidth}
        \centering
        \includegraphics[width=\linewidth]{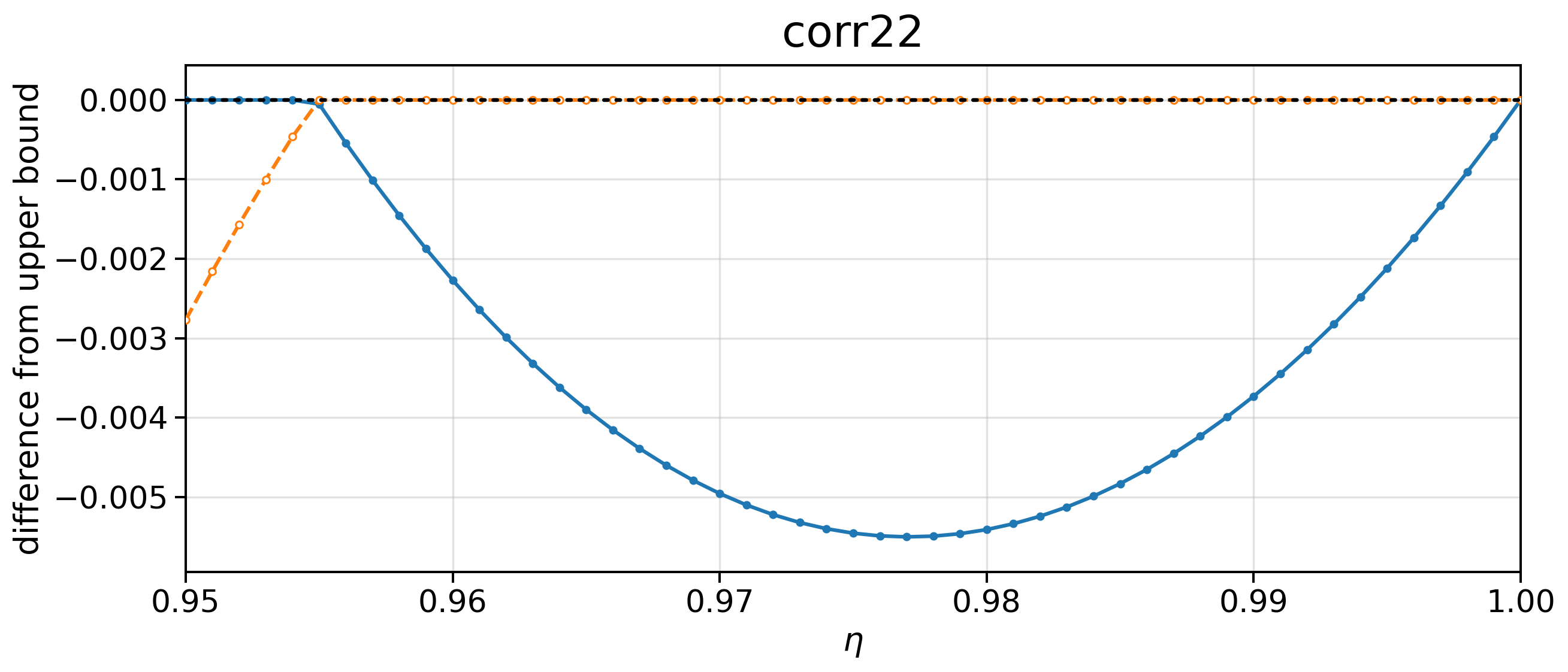}
        \caption{}
        \label{fig:corr22_compare_diff}
    \end{subfigure}
    \caption{Comparison of the $\eta$-lossy value for
    $I_{\mathrm{corr22}}$ with respect to different fallback 
    strategies. (a) The black curve shows the NPA upper bound on the
    overall $\eta$-lossy value, while the blue curve shows the NPA
    upper bound on the $\eta$-lossy value maximized over all optimal
    deterministic strategies. The orange curve shows a lower bound
    on the $\eta$-lossy value with respect to $\mathcal{S}_d$. 
    (b) For $\eta\in[0.95,1]$ with step size $0.001$, the differences between the overall
    NPA upper bound and, respectively, the NPA upper bound over all
    optimal deterministic strategies and the lower bound obtained
    with $\mathcal S_d$ are shown. }
    \label{fig:corr22_compare}
\end{figure}

Consider the non-optimal deterministic strategy
\begin{equation*}
    \mathcal S_d:
    \quad
    A_0=1,\quad
    A_1=1,\quad
    B_0=-1,\quad
    B_1=-1.
\end{equation*}
This obtains a value of $-\frac 1 2 < \frac 5 2$. Meanwhile, 
there are four optimal deterministic strategies for
$I_{\mathrm{corr22}}$. Among them, we find numerically that, for all
sampled values of $\eta$, the largest $\eta$-lossy value is attained by
\begin{equation*}
    \mathcal S_d^{\mathrm{opt}}:
    \quad
    A_0=1,\quad
    A_1=1,\quad
    B_0=1,\quad
    B_1=1.
\end{equation*}

We compare the $\eta$-lossy value with respect to $\mathcal S_d$ to that of $\mathcal{S}_d^{\mathrm{opt}}$ in \Cref{fig:corr22_compare}. We observe that when $\eta \ge 0.955$, the non-optimal deterministic strategy $\mathcal S_{d}$ leads to larger $\eta$-lossy value. Moreover, within numerical precision, the fallback strategy
\(\mathcal S_d\) leads to the overall $\eta$-lossy value
\(\omega_\eta\) in this interval. We conclude that $I_{\mathrm{corr}22}$ does not have a universal optimal fallback strategy.

\subsubsection{Optimal fallback strategy near the threshold efficiency}
Another important question is whether an optimal fallback strategy near the threshold efficiency $\eta^*$ is always an optimal deterministic strategy. 
Surprisingly, we find numerical evidence that this is not true in general.
As an example, consider the Bell expression 
\begin{equation*}
\begin{aligned}
J_{3322}=&
\langle A_0B_0\rangle+\langle A_0B_1\rangle+\langle A_1B_0\rangle-\frac13\langle A_1B_1\rangle+\frac23\langle A_1B_2\rangle+\frac23\langle A_2B_1\rangle-\frac23\langle A_2B_2\rangle\\
&-\frac14\langle A_0\rangle-\frac1{15}\langle A_1\rangle-\frac18\langle A_2\rangle+\frac1{16}\langle B_0\rangle-\frac1{10}\langle B_1\rangle+\frac3{10}\langle B_2\rangle.
\end{aligned}
\end{equation*}
The classical value is $\omega_c = 329/80= 4.1125$. 

The strategy 
\begin{equation*}
\mathcal S_d^\mathrm{opt}:
\quad
\begin{cases}
A_0=-1,\quad A_1=-1,\quad A_2=-1,\\
B_0=-1,\quad B_1=-1,\quad B_2=1,
\end{cases}
\end{equation*}
is the unique optimal deterministic strategy for \(J_{3322}\). And let
\begin{equation*}
\mathcal S_d:
\quad
\begin{cases}
A_0=-1,\quad A_1=1,\quad A_2=-1,\\
B_0=-1,\quad B_1=-1,\quad B_2=1,
\end{cases}
\end{equation*}
be a non-optimal deterministic strategy which achieves $\frac{191}{48} < \frac{329}{80}$.
We compute the threshold efficiency to lie in $\eta^* \in [0.928,0.929]$.
We compute a lower bound on the $\eta$-lossy value with respect to $\mathcal S_d$ using the see-saw method for $\eta \in [0.92,1]$ using step size $0.001$. We next compute an upper bound on the $\eta$-lossy value with respect to $\mathcal S_d^\mathrm{opt}$ using the NPA hierarchy. Moreover, an upper bound on the overall $\eta$-lossy 
value is also calculated via the NPA hierarchy. 
The result is shown in~\Cref{fig:J3322_lossy_value}.  
\begin{figure}[h!]
    \centering
    \includegraphics[width=0.5\linewidth]{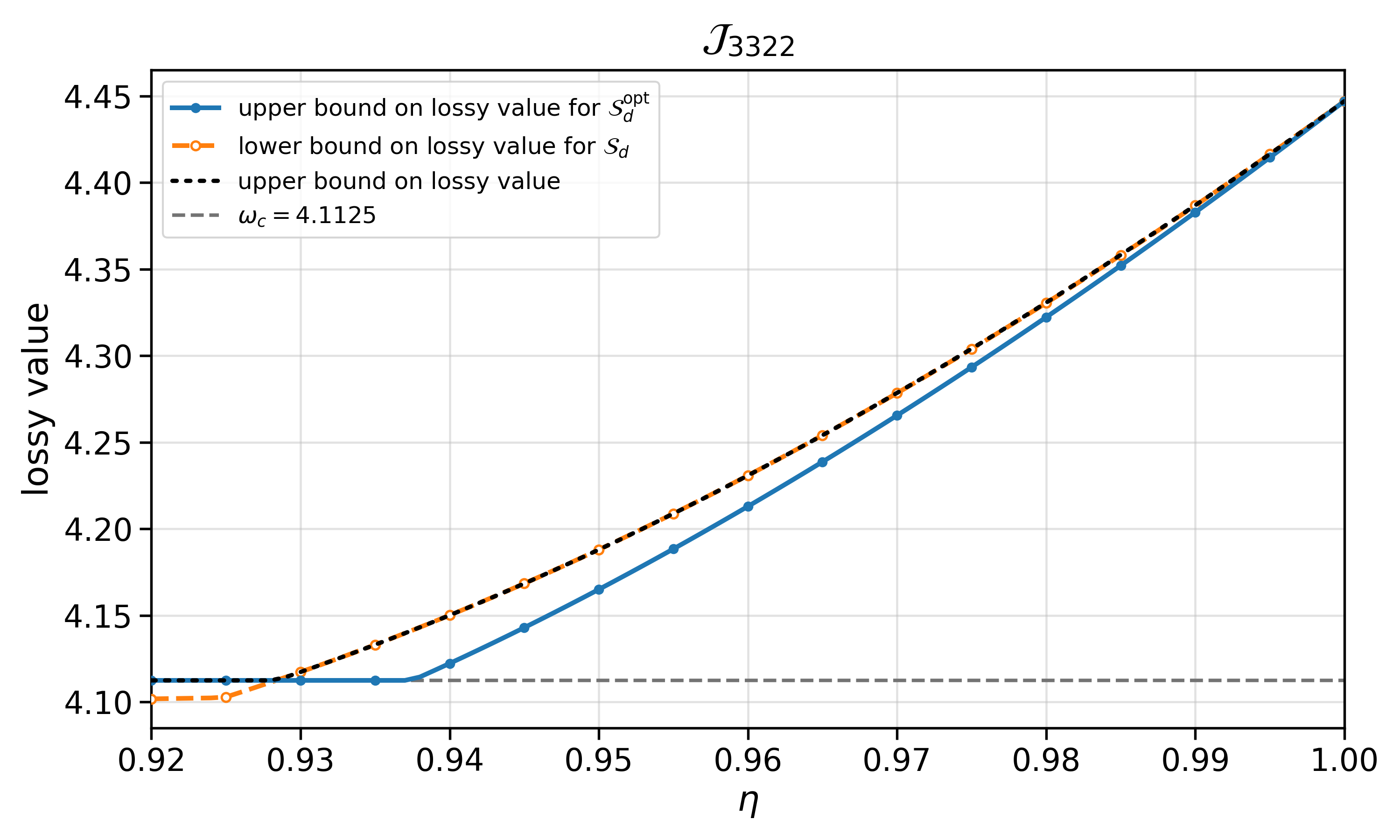}
    \caption{Comparison of the $\eta$-lossy values obtained via NPA optimization
over all deterministic strategies, via NPA optimization for the unique
optimal deterministic strategy $\mathcal S_d^{\mathrm{opt}}$, and via
see-saw optimization for another non-optimal deterministic strategy
$\mathcal S_d$.}
    \label{fig:J3322_lossy_value}
\end{figure}

It is observed that when $\eta > \eta^*$, the upper bound on the overall $\eta$-lossy value coincides with the lower bound on the $\eta$-lossy value with respect to $\mathcal S_d$, which is larger than the $\eta$-lossy value with respect to $\mathcal S_d^\mathrm{opt}$. 
Therefore, we observe that there exist Bell inequalities for which the
optimal fallback strategy at $\eta$ such that $\omega_\eta > \omega_c$ is not an optimal deterministic strategy.

This finding challenges the intuition that the optimal
deterministic strategy should also be
an optimal fallback strategy near $\eta$ where $\omega_\eta \approx \omega_c$. Instead, it suggests that non-optimal deterministic strategies perform better as fallback strategies in the presence of loss and can even help decrease the critical threshold. 
For example, if we use $\mathcal S_d^\mathrm{opt}$ as the fallback strategy, the quantum advantage already disappears when $\eta \le 0.937$, larger than the threshold efficiency $\eta^*$ obtained from $\mathcal S_d$.
This is analogous to the finding that non-maximally entangled states can achieve higher values for the CHSH game in the presence of loss~\cite{PhysRevA.47.R747}.
Our observation has important implications for loophole-free Bell tests
and real-world applications of Bell inequalities~\cite{ding2024coordinating,ding2026quantum}.

\section{Conclusion}
\label{sec:conc}

How to achieve the $\eta$-lossy value for a Bell inequality is an important question for closing the detection loophole and also for applications of Bell inequality violations if the physical entanglement distribution network suffers from loss. 
To achieve the $\eta$-lossy value, the parties must choose a fallback strategy, which is a deterministic strategy they use when loss occurs. In this paper, we systematically study how to choose the optimal fallback strategy for a general Bell inequality at an efficiency $\eta$ using a combination of analytic and numerical tools. Answering this question can decrease the minimum efficiency required to close the detection loophole or to realize the applications of Bell inequality violation. 
These are important questions in this nascent era of quantum networks where entanglement distribution is largely limited by photon loss in transmission media.

We list here some future directions. The most natural next step is to extend our results to the multipartite and non-symmetric loss settings. This will be useful for applications of Bell inequality violation such as quantum telepathy and DIQKD where realistic physical simulations need to be conducted, including the effect of loss.
For this purpose it would also be important to develop numerical techniques that can compute lossy values more efficiently than simply iterating over all possible fallback strategies and analyzing the corresponding lossy Bell expression. In particular, it would be useful to find a good heuristic for the optimal fallback strategy.
Another extension is to consider latency-constrained (LC) games~\cite{ding2025quantum}, where a subset of parties can communicate within a specified latency constraint. In particular, in this setting we have the additional interesting feature that \emph{some parties can tell one another that a loss occurred}. The receiving parties can then adjust their quantum or fallback strategies accordingly, leading to new, adaptive strategies. 

\paragraph{Acknowledgments} DD would like to thank God for all of His provisions. 

\paragraph{AI disclosure} ChatGPT 5.6-thinking was used to assist in the proof of a universal fallback strategy for the CHSH inequality and in finding the $J_{3322}$
counterexample. All AI-assisted
arguments and results were independently verified by the authors.

\bibliographystyle{unsrt}
\bibliography{ref}
\end{document}